\theoremstyle{plain}
\newtheorem{theorem}{Theorem}
\newtheorem{lemma}[theorem]{Lemma}
\newtheorem{proposition}[theorem]{Proposition}
\newtheorem{corollary}[theorem]{Corollary}
\theoremstyle{definition}
\newtheorem{definition}[theorem]{Definition}
\theoremstyle{remark}
\newtheorem{remark}[theorem]{Remark}
\begin{document}

\title{\textbf{Smoothness of the Gap Function \\
in the BCS-Bogoliubov Theory of Superconductivity}}

\author{Shuji Watanabe\\
Division of Mathematical Sciences\\
Graduate School of Engineering, Gunma University\\
4-2 Aramaki-machi, Maebashi 371-8510, Japan\\
e-mail: watanabe@fs.aramaki.gunma-u.ac.jp}

%runninghead{Smoothness of the Gap Function}

\date{}

\maketitle

\begin{abstract}
We deal with the gap equation in the BCS-Bogoliubov theory of superconductivity, where the gap function is a function of the temperature $T$ only. We show that the squared gap function is of class $C^2$ on the closed interval $[\,0,\,T_c\,]$. Here, $T_c$ stands for the transition temperature. Furthermore, we show that the gap function is monotonically decreasing on $[0,\,T_c]$ and obtain the behavior of the gap function at $T=T_c$. We mathematically point out some more properties of the gap function.
\end{abstract}

\bigskip

%%%%%%%%%%%%%%%%%%%%%%%%%%%%%%%%%%%%%%%%%%%%%%%%%%%%%%%%%%%%%% 1
\noindent \textbf{I. INTRODUCTION}

\medskip

Since the surprising discovery by Onnes that the electrical resistivity of mercury drops to zero below the temperature 4.2 K in 1911, the zero electrical resistivity is observed in many metals and alloys. Such a phenomenon is called superconductivity. In 1957 Bardeen, Cooper and Schrieffer \cite{bcs} proposed the highly successful quantum theory of superconductivity, called the BCS theory. In 1958 Bogoliubov \cite{bogoliubov} obtained the results similar to those in the BCS theory using the canonical transformation called the Bogoliubov transformation. The theory by Bardeen, Cooper, Schrieffer and Bogoliubov is called the BCS-Bogoliubov theory.

As an experimental fact, it is observed that it takes a finite energy to excite a quasi particle from the superconducting ground state to an upper energy state. This energy gap is described in terms of the gap function and results from the existence of the electron pairs called the Cooper pairs. Let $k_B>0$ and $\omega_D>0$ stand for the Boltzmann constant and the Debye frequency, respectively. We denote Planck's constant by $h \; (>0)$ and set $\hslash=h/(2\pi)$. Let the temperature $T$ satisfy $0\leq T\leq T_c$, where $T_c>0$ is called the transition temperature (the critical temperature). Let $m>0$ and $\mu>0$ stand for the electron mass and the chemical potential, respectively. Let $k\in\mathbb{R}^3$ denote wave vector and set $\xi_k=\hslash^2|k|^2/(2m)-\mu$. The gap function, denoted by $\Delta_k(T) \,(\geq 0)$, is a function both of the temperature $T$ and of wave vector $k\in\mathbb{R}^3$. In the BCS-Bogoliubov theory, the gap function satisfies the following nonlinear equation called the gap equation:
\begin{equation}\label{eq:descrete}
\Delta_k(T)=-\frac{1}{\,2\,}\sum_{k'}
\frac{U_{k,\,k'}\,\Delta_{k'}(T)}{\,\sqrt{\,\xi_{k'}^2+\Delta_{k'}(T)^2\,}\,}\tanh \frac{\, \sqrt{\,\xi_{k'}^2+\Delta_{k'}(T)^2\,}\,}{2k_BT}
\end{equation}
for $0\leq T\leq T_c$. Here, $k'\in\mathbb{R}^3$ denotes wave vector and the potential $U_{k,\,k'}$ is a function of $k$ and $k'$ satisfying $U_{k,\,k'}\leq 0$. In this connection, see \cite{watanabe} for a new gap equation of superconductivity.

The sum in \eqref{eq:descrete} is often replaced by an integral, and accordingly the gap equation is often regarded as a nonlinear integral equation. In such a situation, Odeh \cite{odeh} and Billard and Fano \cite{billardfano} established the existence and uniqueness of the positive solution to the gap equation in the case $T=0$. In the case $T\geq 0$, Vansevenant \cite{vansevesant} and Yang \cite{yang} determined the transition temperature and showed that there is a unique positive solution to the gap equation. Recently Hainzl, Hamza, Seiringer and Solovej \cite{hhss}, and Hainzl and Seiringer \cite{haizlseiringer} proved that the existence of a positive solution to the gap equation is equivalent to the existence of a negative eigenvalue of a certain linear operator to show the existence of a transition temperature.

Suppose that $U_{k,\,k'}$ is given by (see \cite{bcs})
\begin{equation}\label{eq:uzero}
U_{k,\,k'}=\left\{ \begin{array}{ll}
\displaystyle{-U_0}& \quad (|\xi_k| \leq \hslash\omega_D \;\;
 \mbox{and} \;\; |\xi_{k'}| \leq \hslash\omega_D),\cr
\noalign{\vskip0.2cm}
\displaystyle{0}& \quad (\mbox{otherwise}),
\end{array} \right.
\end{equation}
where $U_0>0$ is a constant. Then $\Delta_k(T)$ depends only on the temperatur $T$ when $|\xi_k|\leq \hslash\omega_D$, whereas $\Delta_k(T)=0$ when $|\xi_k|>\hslash\omega_D$. Let $|\xi_k|\leq \hslash\omega_D$. Then \eqref{eq:descrete} leads to
\begin{equation}\label{eq:descreteprime}
1=\frac{U_0}{\,2\,}\sum_{k'\; (|\xi_{k'}|\leq \hslash\omega_D)}\,
\frac{1}{\,\sqrt{\,\xi_{k'}^2+\Delta(T)^2\,}\,}\tanh \frac{\, \sqrt{\,\xi_{k'}^2+\Delta(T)^2\,}\,}{2k_BT}.
\end{equation}
Here the symbol $k'\; (|\xi_{k'}|\leq \hslash\omega_D)$ stands for $k'$ satisfying $|\xi_{k'}|\leq \hslash\omega_D$, and the gap function $\Delta_k(T)$ is denoted by $\Delta(T)$ simply because it does not depend on $k$ when $k$ satisfies $|\xi_k|\leq \hslash\omega_D$. Accordinly, in this case, the gap function $\Delta(T)$ becomes a function of the temperature $T$ only.

We now replace the sum in \eqref{eq:descreteprime} by the following integral (see \cite{bcs}):
\begin{equation}\label{eq:gapequation}
1=\frac{U_0N_0}{\,2\,}\int_{-\hslash\omega_D}^{\hslash\omega_D} \frac{1}{\,\sqrt{\,\xi^2+\Delta(T)^2\,}\,}\tanh \frac{\, \sqrt{\,\xi^2+\Delta(T)^2\,}\,}{2k_BT}\,d\xi,
\end{equation}
where $0\leq T\leq T_c$, and $N_0>0$ stands for the density of states per unit energy at the Fermi surface.

The gap equation of the form \eqref{eq:gapequation} as well as the hypothesis \eqref{eq:uzero} is accepted widely in condensed matter physics (see e.g. \cite{bcs} and \cite[(11.45), p.392]{ziman}). In this paper we deal with the gap equation \eqref{eq:gapequation} to discuss smoothness of the squared gap function $\Delta(T)^2$ as well as its properties. We show that the squared gap function is of class $C^2$ on the closed interval $[\,0,\,T_c\,]$. Furthermore, we show that the gap function is monotonically decreasing on $[0,\,T_c]$ and obtain the behavior of the gap function at $T=T_c$. We mathematically point out some more properties of the gap function.

It is well known that superconductivity occurs at temperatures below the temperature $T_c>0$ called the transition temperature. Let us now define it.
\begin{definition}[\cite{bcs}]
The transition temperature is the temperature $T_c>0$ satisfying
\[
\frac{1}{\, U_0N_0\,}=\int_0^{\displaystyle{\hslash\omega_D/(2k_BT_c)}} \frac{\,\tanh \eta\,}{\eta}\,d\eta\,.
\]
\end{definition}

\begin{remark}
The equality in the definition above is rewritten as
\[
1=\frac{U_0N_0}{\,2\,}\int_{-\hslash\omega_D}^{\hslash\omega_D}
\frac{1}{\,\sqrt{\,\xi^2\,}\,}\tanh \frac{\, \sqrt{\,\xi^2\,}\,}{2k_BT_c}
\,d\xi\,,
\]
which is obtained by setting $\Delta(T)=0$ and $T=T_c$ in \eqref{eq:gapequation}.
\end{remark}

The paper proceeds as follows. In section 2 we state our main results without proof. In sections 3 and 4 we study some properties of the function $F$ defined by (\ref{eq:functionF}) below. In section 5, on the basis of this study, we prove our main results in a sequence of lemmas.

\bigskip

%%%%%%%%%%%%%%%%%%%%%%%%%%%%%%%%%%%%%%%%%%%%% 2
\noindent \textbf{II. MAIN RESULTS}

\medskip

Let
\[
h(T,\,Y,\,\xi)=\left\{ \begin{array}{ll}\displaystyle{
 \frac{1}{\,\sqrt{\,\xi^2+Y\,}\,}
 \tanh \frac{\, \sqrt{\,\xi^2+Y\,}\,}{2k_BT}
 } \quad &(0<T\leq T_c\,,\quad Y\geq 0),\\
\noalign{\vskip0.3cm} \displaystyle{
 \frac{1}{\,\sqrt{\,\xi^2+Y\,}\,}
 } \quad &(T=0, \quad Y>0)
\end{array}\right.
\]
and set
\begin{equation}\label{eq:functionF}
F(T,\,Y)=\int_0^{\hslash\omega_D} h(T,\,Y,\,\xi)\,d\xi
 -\frac{1}{\,U_0N_0\,}\,.
\end{equation}

We consider the function $F$ on the following domain $W\subset \mathbb{R}^2$:
\[
W=W_1\cup W_2\cup W_3\cup W_4\,,
\]
where
\begin{eqnarray}\nonumber
W_1&=&\left\{ (T,\,Y)\in\mathbb{R}^2:\; 0<T<T_c\,,\; 0<Y<2\,\Delta_0^2 \right\},\\ \nonumber
W_2&=&\left\{ (0,\,Y)\in\mathbb{R}^2:\; 0<Y<2\,\Delta_0^2 \right\},\\
 \nonumber
W_3&=&\left\{ (T,\,0)\in\mathbb{R}^2:\; 0<T\leq T_c \right\},\\ \nonumber
W_4&=&\left\{ (T_c\,,\,Y)\in\mathbb{R}^2:\; 0<Y<2\,\Delta_0^2 \right\}.
\end{eqnarray}
Here,
\begin{equation}\label{eq:delta0}
\Delta_0=\frac{\hslash\omega_D}{\,\sinh\frac{1}{\,U_0N_0\,}\,}\,.
\end{equation}

\begin{remark} The gap equation (\ref{eq:gapequation}) is rewritten as $\displaystyle{F(T,\,Y)=0}$, where $Y$ corresponds to $\displaystyle{ \Delta(T)^2}$.
\end{remark}

The following are our main results.
\begin{theorem}\label{thm:solution}
Let $F$ be as in (\ref{eq:functionF}) and $\Delta_0$ as in (\ref{eq:delta0}). Then there is a unique solution: $T \mapsto Y=f(T)$ of class $C^2$ on the closed interval $[\,0,\,T_c\,]$ to the gap equation $\displaystyle{F(T,\,Y)=0}$ such that the function $f$ is monotonically decreasing on $[0,\,T_c]$:
\[
f(0)=\Delta_0^2>f(T_1)>f(T_2)>f(T_c)=0, \qquad 0<T_1<T_2<T_c\,.
\]
\end{theorem}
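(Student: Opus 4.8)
The plan is to obtain $f$ from the implicit function theorem applied to the equation $F(T,Y)=0$, after first settling the regularity of $F$, the signs of its partial derivatives, and the two endpoints of the solution curve.

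First I would check that $F$ is of class $C^2$ on $W$ (the argument in fact gives $C^\infty$). The only delicate points are the boundary pieces $Y=0$ and $T=0$. Writing $g(s)=(\tanh s)/s$, the function $g$ is even and smooth on $\mathbb{R}$ with $g(0)=1$, so $g(s)=\tilde g(s^2)$ for a $C^\infty$ function $\tilde g$ on $[0,\infty)$ (which moreover extends analytically a little to the left of $0$); hence for $T>0$
\[
h(T,Y,\xi)=\frac{1}{2k_BT}\,\tilde g\!\left(\frac{\xi^2+Y}{(2k_BT)^2}\right),
\]
which is $C^\infty$ in $(T,Y)$ on $(0,\infty)\times[0,\infty)$, with all derivatives bounded uniformly in $\xi\in[0,\hslash\omega_D]$ on compact sets of $(T,Y)$. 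Differentiating under the integral sign then gives $F\in C^\infty$ on $W_1\cup W_3\cup W_4$ and on a neighbourhood of each of these sets in $\mathbb{R}^2$ (in particular $F$ extends smoothly a little past $T=T_c$ and to $Y$ slightly negative). On $W_2$, where $Y$ is bounded away from $0$, the factor $\tanh\bigl(\sqrt{\xi^2+Y}/(2k_BT)\bigr)$ tends to $1$ with all its $T$-derivatives tending to $0$ exponentially fast as $T\to0^+$, uniformly in $\xi$; so $h$, and hence $F$, extends to a $C^\infty$ function up to and including $T=0$ there.

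Next I would record the signs of the partials and the endpoints. Since $s\mapsto(\tanh s)/s$ is strictly decreasing on $(0,\infty)$ and $\xi^2+Y$ increases with $Y$, one has $\partial_Y h<0$, hence $F_Y<0$ on all of $W$; and since $\sqrt{\xi^2+Y}/(2k_BT)$ decreases in $T$ while $\tanh$ increases, $\partial_T h\le0$ with strict inequality for $T>0$, so $F_T\le0$ on $W$ and $F_T<0$ off $W_2$. The substitution $\eta=\xi/(2k_BT_c)$ turns the defining equation of $T_c$ into $F(T_c,0)=0$, while $\int_0^{\hslash\omega_D}d\xi/\sqrt{\xi^2+Y}=\operatorname{arcsinh}\bigl(\hslash\omega_D/\sqrt{Y}\bigr)$ together with \eqref{eq:delta0} gives $F(0,\Delta_0^2)=0$ and $F(0,Y)\to+\infty$ as $Y\to0^+$. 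Using $F_T<0$ for $T>0$ I would then get $F(T,0)>F(T_c,0)=0$ for $0\le T<T_c$ and $F(T,\Delta_0^2)<F(0,\Delta_0^2)=0$ for $0<T\le T_c$. Hence, for each fixed $T\in[0,T_c]$, the continuous strictly decreasing map $Y\mapsto F(T,Y)$ is positive at the left endpoint and negative at $Y=\Delta_0^2$ (with the obvious reading at $T=0$ and $T=T_c$), so it has exactly one zero $Y=f(T)$, with $0\le f(T)\le\Delta_0^2<2\Delta_0^2$; thus the graph of $f$ lies in $W$, and strict monotonicity in $Y$ gives uniqueness of the solution function.

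It remains to run the implicit function theorem. Since $F$ is $C^2$ near each point $(T,f(T))$ of the graph (and up to the face $T=0$) with $F_Y(T,f(T))<0$, the implicit function theorem --- two-sided for $0<T<T_c$ and near $T=T_c$, one-sided at $T=0$ --- yields $f\in C^2([0,T_c])$ together with $f'(T)=-F_T(T,f(T))/F_Y(T,f(T))$. Because $F_Y<0$ and $F_T\le0$ with $F_T<0$ for $T>0$, we have $f'\le0$ on $[0,T_c]$ and $f'<0$ on $(0,T_c]$, so $f$ is strictly decreasing; combined with $f(0)=\Delta_0^2$ and $f(T_c)=0$ this is exactly the asserted chain $f(0)=\Delta_0^2>f(T_1)>f(T_2)>f(T_c)=0$ for $0<T_1<T_2<T_c$. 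The step I expect to be the main obstacle is the first: verifying the $C^2$ (indeed $C^\infty$) regularity of $F$ cleanly up to the boundary pieces $Y=0$ and $T=0$ and slightly past $T=T_c$, since that is precisely where $\sqrt{\xi^2+Y}$ and the saturation of $\tanh$ threaten to spoil differentiability; the remaining steps are comparatively routine.
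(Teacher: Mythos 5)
Your proposal is correct in substance, but it takes a genuinely different route from the paper. The paper deliberately avoids the implicit function theorem: it remarks that the two zeros $(0,\Delta_0^2)$ and $(T_c,0)$ lie on the topological boundary of $W$, so "one can not apply the implicit function theorem in its present form," and instead it constructs $f$ by hand (monotonicity of $Y\mapsto F(T,Y)$ plus the intermediate value theorem in Lemma \ref{lm:existence}), proves only that $F$ is $C^1$ on $W$ and $C^2$ on the open piece $W_1$, and then establishes differentiability of $f$ at the endpoints $T=0$ and $T=T_c$ by direct, fairly lengthy estimates on the difference quotients of $f'$ (Steps 1 and 2 of the final lemma). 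You instead remove the obstruction at its source: the identity $h(T,Y,\xi)=\tfrac{1}{2k_BT}\,\tilde g\bigl((\xi^2+Y)/(2k_BT)^2\bigr)$, with $\tilde g$ the smooth (indeed analytic near $0$) function satisfying $\tilde g(s^2)=\tanh(s)/s$, eliminates the square root entirely, so $F$ extends smoothly to $Y$ slightly negative and to $T$ slightly beyond $T_c$, making $(T_c,0)$ an interior point of the extended domain where the standard IFT applies; the exponential flatness in $T$ at $T=0$ handles the other endpoint. This buys a shorter argument and a stronger conclusion ($C^\infty$ regularity of $f$, with all derivatives vanishing at $T=0$, rather than just $C^2$), essentially subsuming the paper's Sections III--V; what the paper's approach buys is self-containedness, with every limit computed from explicit elementary bounds. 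The one place you should be more careful is the phrase "one-sided at $T=0$": there is no off-the-shelf one-sided IFT, so you should either extend $F$ to $T<0$ (e.g. by $F(T,Y):=F(0,Y)$ there, which is legitimate precisely because all $T$-derivatives of $F$ vanish at $T=0^{+}$, locally uniformly in $Y$) and apply the two-sided theorem, or argue the one-sided differentiability of $f$ at $0$ directly as the paper does. That is a matter of presentation, not a gap.
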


Let $g$ be given by
\begin{equation}\label{eq:fng}
g(\eta)= \left\{ \begin{array}{ll}\displaystyle{
\frac{1}{\,\eta^2\,}\left( \frac{1}{\,\cosh^2\eta \,}
 -\frac{\,\tanh\eta\,}{\eta}\right) } \qquad &(\eta>0),\\
\noalign{\vskip0.3cm} \displaystyle{
-\frac{\,2\,}{\,3\,} } &(\eta=0).
\end{array}\right.
\end{equation}
Note that $g(\eta)<0$, as is pointed out by Lemma \ref{gproperty} below. Let $G$ be
given by
\begin{equation}\label{eq:fncg}
G(\eta)= \left\{ \begin{array}{ll}\displaystyle{
\frac{1}{\,\eta^2\,} \left\{ 3\,g(\eta)
+2\,\frac{\tanh\eta}{\,\eta\,\cosh^2\eta \,} \right\} } \qquad &(\eta>0),\\
\noalign{\vskip0.3cm} \displaystyle{
-\frac{\,16\,}{\,15\,} } &(\eta=0).
\end{array}\right.
\end{equation}
See Lemma \ref{lm:cgproperty} below for some properties of $G$.

\begin{proposition}\label{prp:behavior}
Let $f$ be as in Theorem \ref{thm:solution}. Then the values of the derivative
$f'$ at $T=0$ and at $T=T_c$ are given as follows:
\[
f'(0)=0,\qquad f'(T_c)=8\,k_B^2T_c\,\frac{\,\displaystyle{
\int_0^{\displaystyle{\hslash\omega_D/(2k_BT_c)}}
 \frac{d\eta}{\,\cosh^2\eta\,}
}\,}{\,\displaystyle{
\int_0^{\displaystyle{\hslash\omega_D/(2k_BT_c)}} g(\eta)\,d\eta
}\,}<0\,.
\]
Consequently, the behavior of $f$ at $T=T_c$ is given by
\begin{equation}\label{eq:behavioroff}
f(T) \approx -f'(T_c)\,(T_c-T)
= - \, 8\,k_B^2\,\frac{\,\displaystyle{
\int_0^{\displaystyle{\hslash\omega_D/(2k_BT_c)}} \frac{d\eta}{\,\cosh^2\eta\,}
}\,}{\,\displaystyle{ 
\int_0^{\displaystyle{\hslash\omega_D/(2k_BT_c)}} g(\eta)\,d\eta
}\,}\; T_c\,(T_c-T)\,.
\end{equation}
\end{proposition}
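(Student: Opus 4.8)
\textbf{Proof proposal for Proposition \ref{prp:behavior}.}

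The plan is to obtain both values of $f'$ by implicit differentiation of the identity $F(T,f(T))=0$ on $[0,T_c]$, using the $C^2$ regularity of $f$ guaranteed by Theorem \ref{thm:solution} together with the explicit formulas for the partial derivatives $F_T$ and $F_Y$ that must already be available from the analysis of $F$ in sections 3 and 4. Implicit differentiation gives $f'(T) = -F_T(T,f(T))/F_Y(T,f(T))$ wherever $F_Y\neq 0$, so the first task is to record that $F_Y<0$ throughout $W$ (this monotonicity in $Y$ is presumably one of the key lemmas behind the uniqueness statement), so the quotient is well defined at both endpoints.

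For $f'(0)$: here $f(0)=\Delta_0^2>0$, so we are at the interior-type point $(0,\Delta_0^2)$ where $h$ is given by the second branch, with no $\tanh$ factor and hence no $T$-dependence in the integrand at $T=0$. The claim $f'(0)=0$ should follow because $F_T(0,Y)=0$ for every $Y>0$: differentiating $h(T,Y,\xi)$ in $T$ produces a factor whose limit as $T\to 0^+$ vanishes, since $\tanh\big(\sqrt{\xi^2+Y}/(2k_BT)\big)\to 1$ exponentially fast and its $T$-derivative decays like $T^{-2}e^{-c/T}$. One should justify interchanging $\partial_T$ with $\int_0^{\hslash\omega_D}$ near $T=0$ (dominated convergence, using that $Y$ stays bounded below by a positive constant near $T=0$ along the graph of $f$), conclude $F_T(0,\Delta_0^2)=0$, and hence $f'(0)=0$.

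For $f'(T_c)$: here $f(T_c)=0$, so we are at the boundary point $(T_c,0)\in W_3\cap\overline{W_4}$. The natural move is the substitution $\xi = 2k_BT\,\eta$ to rewrite $F(T,Y)$ and its derivatives in terms of the dimensionless variable $\eta$, so that the functions $g$ and $G$ appear. At $Y=0$ one computes $F_Y(T_c,0)$ and $F_T(T_c,0)$ as integrals over $[0,\hslash\omega_D/(2k_BT_c)]$; the derivative $\partial_Y h$ at $Y=0$ produces (after the rescaling) exactly $g(\eta)$ up to a constant, which is why $g$ was introduced, and $\partial_T h$ at $Y=0$ produces the $\cosh^{-2}\eta$ integrand — essentially because $\partial_T\!\left[\frac{1}{|\xi|}\tanh\frac{|\xi|}{2k_BT}\right]$ rescales to a multiple of $\frac{1}{T}\cdot\frac{1}{\cosh^2\eta}$. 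Assembling $f'(T_c)=-F_T(T_c,0)/F_Y(T_c,0)$ and tracking the constants $k_B$, $T_c$ yields the stated formula; the sign $f'(T_c)<0$ follows from $g(\eta)<0$ (Lemma \ref{gproperty}) making the denominator negative while the numerator $\int_0^{\hslash\omega_D/(2k_BT_c)}\cosh^{-2}\eta\,d\eta>0$. Finally \eqref{eq:behavioroff} is just the first-order Taylor expansion of $f$ at $T_c$ using $f(T_c)=0$.

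The main obstacle is the endpoint behavior at $T=0$: one must carefully verify that the formula $f'(T) = -F_T/F_Y$ extends continuously to $T=0$ and that $F_T(0,Y)=0$ — in other words, that differentiation under the integral sign is legitimate uniformly near $T=0$ despite the $\tanh$ term. This requires a dominated-convergence argument exploiting that along the solution curve $Y=f(T)$ stays bounded away from $0$ for small $T$ (by monotonicity, $f(T)\geq f(T_1)>0$ for $T\leq T_1$), which forces $\sqrt{\xi^2+f(T)}\geq\sqrt{f(T_1)}>0$ and hence exponential decay of all the problematic terms. At $T=T_c$ the analysis is comparatively routine once the $\xi=2k_BT\eta$ substitution is in place, the only care being bookkeeping of constants and confirming the integrals converge (which they trivially do, being integrals of bounded functions over a bounded interval).
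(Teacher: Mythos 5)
Your proposal is correct and follows essentially the same route as the paper: the paper establishes $f'(T)=-F_T(T,f(T))/F_Y(T,f(T))$ on all of $[0,T_c]$ (Lemma \ref{lm:derivative}, resting on Lemmas \ref{lm:FTFYW1}--\ref{lm:FC1onW}), computes $F_T(0,Y_0)=0$ by the same exponential-decay/difference-quotient estimate you describe, and reads off $f'(T_c)$ from the explicit boundary formulas for $F_T(T_c,0)$ and $F_Y(T_c,0)$ after the rescaling $\eta=\xi/(2k_BT_c)$, with the sign coming from $g<0$ exactly as you argue.
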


\begin{remark} The behavior of $f$ similar to (\ref{eq:behavioroff}) was already obtained by a different method in the context of theoretical, condensed matter physics. However, Proposition \ref{prp:behavior} gives the new form (\ref{eq:behavioroff}) explicitly in the context of mathematics.
\end{remark}

Let $\phi$ be a function of $\eta$ and let $\eta$ be a function of $\xi$.
Set
\begin{equation}\label{eq:integralI}
I\left[ \,\phi(\eta) \,\right]=\int_0^{\hslash\omega_D} \phi(\eta)\,d\xi\,.
\end{equation}
\begin{proposition}\label{prp:behaviorprime}
Let $f$ be as in Theorem \ref{thm:solution} and $I\left[ \cdot \right]$ as in (\ref{eq:integralI}). Then the values of the second derivative $f''$ at $T=0$
and at $T=T_c$ are given as follows:
\[
f''(0)=0,
\]
\begin{eqnarray}\nonumber
& &f''(T_c) \\ \nonumber
&=& 16\, k_B^2\,\frac{\,\displaystyle{
 I\left[ \,\frac{\,\eta_0\tanh \eta_0-1\,}{\cosh^2 \eta_0} \,\right]
}\,}{\, \displaystyle{
 I\left[ \,g(\eta_0)\, \right]  }\,}
-32\, k_B^2\,\frac{\,\displaystyle{
 I\left[ \,\frac{1}{\,\cosh^2 \eta_0\,}\, \right]
 I\left[ \,\frac{\,\tanh \eta_0\,}{\,\eta_0\cosh^2 \eta_0\,}\, \right]
 }\,}{\, \displaystyle{
 \left\{ \, I\left[ \,g(\eta_0)\, \right] \, \right\}^2  }\,} \\ \nonumber
& &\quad +8\, k_B^2\,\frac{\,\displaystyle{
 \left\{ I\left[ \,\frac{1}{\,\cosh^2 \eta_0\,}\, \right] \right\}^2
 I\left[ \,G(\eta_0)\, \right]
 }\,}{\, \displaystyle{
 \left\{ \, I\left[ \,g(\eta_0)\, \right] \, \right\}^3 }\,}\,,
\qquad \eta_0=\frac{\xi}{\,2k_BT_c\,}\,.
\end{eqnarray}
\end{proposition}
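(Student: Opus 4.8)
The plan is to obtain both $f''(0)$ and $f''(T_c)$ from the identity $F(T,f(T))\equiv 0$ on $[0,T_c]$ by implicit differentiation, evaluating the resulting expressions at the two endpoints with the help of explicit formulas for the partial derivatives of $F$. Differentiating $F(T,f(T))=0$ once gives $F_T+F_Yf'=0$ and once more gives $F_{TT}+2F_{TY}f'+F_{YY}(f')^2+F_Yf''=0$, so that, since $F_Y\neq 0$ on $W$,
\[
f'=-\frac{F_T}{F_Y},\qquad f''=-\frac{F_{TT}+2F_{TY}f'+F_{YY}(f')^2}{F_Y}.
\]
Thus the task reduces to computing $F_T,F_Y,F_{TT},F_{TY},F_{YY}$.

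I would compute these by differentiating under the integral sign in \eqref{eq:functionF} (legitimate on the region at hand because $h$ together with its first and second partials in $(T,Y)$ is continuous and, near $T=T_c$, bounded uniformly in $\xi$), writing everything through $\eta=\sqrt{\xi^2+Y}/(2k_BT)$ and using $\frac{d}{d\eta}\tanh\eta=1/\cosh^2\eta$, $\frac{d}{d\eta}(\tanh\eta/\eta)=\eta\,g(\eta)$, and — the identity that matters — $g'(\eta)=-\tfrac{3}{\eta}g(\eta)-\tfrac{2\tanh\eta}{\eta^2\cosh^2\eta}$, i.e. $g'(\eta)/\eta=-G(\eta)$. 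A short computation then yields, for $0<T\le T_c$ and $Y\ge 0$,
\[
F_T=-\frac{1}{2k_BT^2}\,I\!\left[\frac{1}{\cosh^2\eta}\right],\qquad
F_Y=\frac{1}{16k_B^3T^3}\,I[g(\eta)],\qquad
F_{TY}=\frac{1}{8k_B^3T^4}\,I\!\left[\frac{\tanh\eta}{\eta\cosh^2\eta}\right],
\]
\[
F_{TT}=-\frac{1}{k_BT^3}\,I\!\left[\frac{\eta\tanh\eta-1}{\cosh^2\eta}\right],\qquad
F_{YY}=-\frac{1}{128k_B^5T^5}\,I[G(\eta)],
\]
with $I[\cdot]$ as in \eqref{eq:integralI}. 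At $Y=0$ all the integrands are continuous on $[0,\hslash\omega_D]$, since $g$, $G$, $\tanh\eta/(\eta\cosh^2\eta)$ and $(\eta\tanh\eta-1)/\cosh^2\eta$ extend continuously to $\eta=0$, so every integral is finite.

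For $T=T_c$ I would put $Y=f(T_c)=0$, so that $\eta$ specializes to $\eta_0=\xi/(2k_BT_c)$, and substitute $f'(T_c)=8k_B^2T_c\,I[1/\cosh^2\eta_0]/I[g(\eta_0)]$ from Proposition \ref{prp:behavior} into the second-order formula above; since $I[g(\eta_0)]\neq 0$ (indeed negative, as $g<0$), all powers of $k_B$ and $T_c$ cancel and the three resulting terms rearrange exactly into the stated expression for $f''(T_c)$. For $T=0$ I would take $Y=f(0)=\Delta_0^2>0$; then as $T\to 0^+$ one has $\eta\ge\Delta_0/(2k_BT)\to+\infty$ uniformly in $\xi\in[0,\hslash\omega_D]$, and each of $F_T,F_{TT},F_{TY}$ carries a factor $1/\cosh^2\eta$ (times factors polynomial in $1/T$ and $1/\eta$), so all three tend to $0$; since moreover $F_Y(0,\Delta_0^2)=-\tfrac12\int_0^{\hslash\omega_D}(\xi^2+\Delta_0^2)^{-3/2}\,d\xi\neq 0$, the formulas give $f'(0)=0$ and then $f''(0)=0$.

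The routine part is the differentiation that produces $F_{TT},F_{TY},F_{YY}$; the one genuinely substantive observation is that the second derivative in $Y$ reorganizes via $g'(\eta)/\eta=-G(\eta)$, which is precisely what forces $G$, and hence $I[G(\eta_0)]$, into the final formula. The only further care needed is justifying differentiation under the integral up to the boundary components $T=0$ and $T=T_c$ of $W$, together with the one-sided behaviour of the integrands at $\xi=0$; both are covered by the properties of $F$, $g$ and $G$ developed in Sections 3 and 4.
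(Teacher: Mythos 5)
Your computations are all correct: the explicit forms of $F_T,F_Y,F_{TT},F_{TY},F_{YY}$ agree with those in the paper (Lemma \ref{lm:FTFYW1} and the proof of Lemma \ref{lm:FTTexistence}), the identity $g'(\eta)=-\eta\,G(\eta)$ is exactly the paper's Lemma \ref{lm:cgproperty}, and substituting into $f''=-\bigl(F_{TT}+2F_{TY}f'+F_{YY}(f')^2\bigr)/F_Y$ does reproduce the stated expression for $f''(T_c)$ after the powers of $k_B$ and $T_c$ cancel. So the interior of the argument matches the paper.

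The genuine gap is at the two endpoints, which is precisely where the proposition has content and where the paper spends essentially all of its effort. Lemmas \ref{lm:FTTexistence} and \ref{lm:FC2W1} establish the second-order partial derivatives of $F$, and hence the formula (\ref{eq:f''}) for $f''$, only on the \emph{open} set $W_1$; the boundary points $(0,\Delta_0^2)\in W_2$ and $(T_c,0)\in W_3$ are not covered, and nothing in Sections 3--4 gives second-order partials there, contrary to your closing claim that the needed care ``is covered by the properties developed in Sections 3 and 4.'' More importantly, even granting continuity of the integrands at $Y=0$, ``putting $Y=f(T_c)=0$ into the second-order formula'' presupposes that $f'$ is (one-sidedly) differentiable at $T=T_c$ and at $T=0$ --- which is the thing to be proved. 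What you actually obtain is $\lim_{T\uparrow T_c}f''(T)$ and $\lim_{T\downarrow 0}f''(T)$; to convert these into the values $f''(T_c)$ and $f''(0)$ you must either invoke the derivative-limit (mean value) theorem for $f'$ --- which in turn requires extending the continuity of $F_{TT},F_{TY},F_{YY}$ along the solution curve up to the boundary, an analogue of Lemma \ref{lm:FC1onW} for second-order derivatives that is not in the paper --- or argue directly with difference quotients. The paper chooses the latter: at $T=0$ it bounds $\bigl|f'(T)-f'(0)\bigr|/T$ by an exponentially decaying quantity, and at $T=T_c$ it expands $\bigl(f'(T_c)-f'(T)\bigr)/(T_c-T)$ using mean-value expressions for $g(\eta)-g(\eta_0)$ and $\cosh\eta-\cosh\eta_0$ together with the bound $\bigl|g'(\eta_1)/(\sqrt{\xi^2+f(T)}+\xi)\bigr|\le \max|G|/(k_BT_c)$. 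Your proposal omits this step entirely; it is fillable, but it is the substantive part of the proof rather than ``routine further care.''
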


Combining Theorem \ref{thm:solution} with Propositions \ref{prp:behavior} and
\ref{prp:behaviorprime} immediately implies the following.
\begin{corollary}
There is a unique gap function: $T \mapsto \Delta(T)=\sqrt{f(T)}$ on the closed
interval $[\,0,\,T_c\,]$ such that it is of class $C^2$ on the interval
$[\,0,\,T_c\,)$, and is monotonically decreasing on $[0,\,T_c]$:
\[
\Delta(0)=\Delta_0>\Delta(T_1)>\Delta(T_2)>\Delta(T_c)=0, \qquad
0<T_1<T_2<T_c\,.
\]
Furthermore, \quad $\displaystyle{ \Delta'(0)=\Delta''(0)=0}$ \quad and \quad
$\displaystyle{ \lim_{T\uparrow T_c} \Delta'(T)=-\infty }$.
\end{corollary}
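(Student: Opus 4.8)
The plan is to set $\Delta(T)=\sqrt{f(T)}$, with $f$ the function furnished by Theorem \ref{thm:solution}, and to extract every assertion of the corollary from the properties of $f$ via the chain rule. First I would record that $f$ is strictly positive on $[0,\,T_c)$ and vanishes at $T_c$: this is immediate from the chain $f(0)=\Delta_0^2>f(T_1)>f(T_2)>f(T_c)=0$ in Theorem \ref{thm:solution}, applied for arbitrary $0<T_1<T_2<T_c$. Hence $\Delta=\sqrt{f}$ is well defined on $[0,\,T_c]$, and on $[0,\,T_c)$ it is the composition of the $C^2$ function $f$ with the $C^\infty$ map $y\mapsto\sqrt{y}$ on $(0,\infty)$, so $\Delta$ is of class $C^2$ on $[0,\,T_c)$. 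Since $y\mapsto\sqrt{y}$ is strictly increasing and $f$ is strictly decreasing, $\Delta$ is strictly decreasing on $[0,\,T_c]$, with $\Delta(0)=\sqrt{\Delta_0^2}=\Delta_0$ and $\Delta(T_c)=0$, which gives the displayed chain of inequalities. Uniqueness follows from the uniqueness of $f$ in Theorem \ref{thm:solution} together with the constraint $\Delta(T)\ge 0$: any admissible $\Delta$ makes $Y=\Delta(T)^2$ a solution of $F(T,\,Y)=0$, hence $\Delta(T)^2=f(T)$, hence $\Delta=\sqrt{f}$.

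Next, for $0\le T<T_c$ I would differentiate, using that $f$ is $C^1$ with $f>0$ there:
\[
\Delta'(T)=\frac{f'(T)}{\,2\sqrt{f(T)}\,},
\]
and, differentiating once more using that $f$ is $C^2$,
\[
\Delta''(T)=\frac{f''(T)}{\,2\sqrt{f(T)}\,}-\frac{f'(T)^2}{\,4\,f(T)^{3/2}\,}.
\]
Evaluating at $T=0$, where $f(0)=\Delta_0^2>0$, and inserting $f'(0)=0$ from Proposition \ref{prp:behavior} and $f''(0)=0$ from Proposition \ref{prp:behaviorprime}, both formulas yield $\Delta'(0)=\Delta''(0)=0$.

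For the behavior at $T_c$, since $f$ is $C^1$ on the closed interval $[0,\,T_c]$, $f'$ extends continuously up to $T_c$, so $f'(T)\to f'(T_c)$ as $T\uparrow T_c$, and by Proposition \ref{prp:behavior} this limit is a finite negative number. Meanwhile $f(T)\to f(T_c)=0$ with $f(T)>0$ for $T<T_c$, so $\sqrt{f(T)}\to 0^{+}$. Plugging these into the formula for $\Delta'(T)$ gives $\Delta'(T)=f'(T)/\bigl(2\sqrt{f(T)}\bigr)\to-\infty$, that is, $\lim_{T\uparrow T_c}\Delta'(T)=-\infty$. The one point deserving care — and the only mild obstacle — is precisely this passage to the limit at $T_c$: one must invoke the global $C^1$ regularity of $f$ on the closed interval (so that $f'(T_c)$ is the finite value of Proposition \ref{prp:behavior}) rather than merely the $C^2$ regularity of $\Delta$ on $[0,\,T_c)$, since $\Delta$ itself is not differentiable at $T_c$. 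Everything else is direct substitution into Theorem \ref{thm:solution} and Propositions \ref{prp:behavior}--\ref{prp:behaviorprime}.
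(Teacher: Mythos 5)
Your proposal is correct and is exactly the argument the paper intends: the corollary is stated as an immediate consequence of Theorem \ref{thm:solution} and Propositions \ref{prp:behavior} and \ref{prp:behaviorprime}, and you have simply written out the chain-rule computations for $\Delta=\sqrt{f}$ that make this implication explicit. Your observation that the limit $\lim_{T\uparrow T_c}\Delta'(T)=-\infty$ rests on the $C^1$ regularity of $f$ up to the closed endpoint $T_c$ (so that $f'(T)\to f'(T_c)<0$ while $\sqrt{f(T)}\to 0^{+}$) is precisely the right point to flag.
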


\bigskip

%%%%%%%%%%%%%%%%%%%%%%%%%%%%%%%%%%%%%%%%%%%%%%%%%%%%% 3
\noindent \textbf{III. THE FIRST-ORDER PARTIAL DERIVATIVES OF $F$}

\medskip

In this section we deal with the first-order partial derivatives of the
function $F$ and show that $F$ is of class $C^1$ on $W$.

A straightforward calculation yields the following.
\begin{lemma}\label{gproperty}
Let $g$ be as in (\ref{eq:fng}). Then the function $g$ is of class $C^1$
on $[0,\,\infty)$ and satisfies
\[
g(\eta)<0,\qquad g'(0)=0,\qquad
\lim_{\eta\to\infty}g(\eta)=\lim_{\eta\to\infty}g'(\eta)=0.
\]
\end{lemma}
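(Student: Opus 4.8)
The statement to prove is Lemma~\ref{gproperty}, which asserts that $g$ defined in \eqref{eq:fng} is of class $C^1$ on $[0,\infty)$, is strictly negative, has $g'(0)=0$, and that $g$ and $g'$ vanish at infinity.

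\medskip

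The plan is to reduce everything to elementary estimates on $\tanh$, $\cosh$, and their power series. First I would establish the sign $g(\eta)<0$ for $\eta>0$. Since the prefactor $1/\eta^2$ is positive, this amounts to showing $\cosh^{-2}\eta < \eta^{-1}\tanh\eta$, i.e. $\eta < \sinh\eta\cosh\eta = \tfrac12\sinh(2\eta)$, which is the classical inequality $x<\sinh x$ applied at $x=2\eta$ (equivalently, compare Taylor coefficients, or note the derivative of $\sinh(2\eta)-2\eta$ is $2(\cosh 2\eta-1)>0$). At $\eta=0$ the value $-\tfrac23$ is negative by fiat, so $g<0$ on all of $[0,\infty)$.

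\medskip

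Next I would handle smoothness and the value $g'(0)=0$ near the origin. The cleanest route is to expand $\tanh\eta$ and $\cosh^{-2}\eta$ as power series (both are even and odd respectively, with infinite radius… actually $\tanh$ has radius $\pi/2$, which is all we need near $0$): $\tanh\eta = \eta - \tfrac13\eta^3 + \tfrac{2}{15}\eta^5 - \cdots$ and $\cosh^{-2}\eta = 1 - \eta^2 + \tfrac23\eta^4 - \cdots$. Then $\cosh^{-2}\eta - \eta^{-1}\tanh\eta = (1-\eta^2+\tfrac23\eta^4-\cdots) - (1 - \tfrac13\eta^2 + \tfrac{2}{15}\eta^4 - \cdots) = -\tfrac23\eta^2 + \tfrac{8}{15}\eta^4 - \cdots$, so dividing by $\eta^2$ gives $g(\eta) = -\tfrac23 + \tfrac{8}{15}\eta^2 - \cdots$, a convergent power series near $0$. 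This shows $g$ extends analytically (in particular $C^1$, indeed $C^\infty$) across $\eta=0$ with $g(0)=-\tfrac23$ and $g'(0)=0$ (no linear term). For $\eta>0$ away from $0$, $g$ is manifestly $C^1$ (indeed $C^\infty$) as a composition and quotient of smooth functions with nonvanishing denominator, so $g\in C^1([0,\infty))$.

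\medskip

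Finally, the behavior at infinity: as $\eta\to\infty$, $\tanh\eta\to1$, $\cosh^{-2}\eta\to0$ (exponentially), so $g(\eta) = \eta^{-2}(\cosh^{-2}\eta - \eta^{-1}\tanh\eta) \to 0$ since both terms in the bracket are bounded and the prefactor $\eta^{-2}\to0$. For $g'$, I would differentiate the expression for $\eta>0$ explicitly: $g'(\eta) = -\tfrac{2}{\eta^3}\bigl(\cosh^{-2}\eta - \eta^{-1}\tanh\eta\bigr) + \tfrac{1}{\eta^2}\bigl(-2\cosh^{-2}\eta\tanh\eta + \eta^{-2}\tanh\eta - \eta^{-1}\cosh^{-2}\eta\bigr)$; each term is a product of a negative power of $\eta$ with a function that stays bounded (the $\tanh$ and $\cosh^{-2}$ factors), so $g'(\eta)\to0$ as $\eta\to\infty$. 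None of these steps presents a real obstacle — this is the ``straightforward calculation'' the paper advertises; the only mild care needed is to organize the series expansion cleanly enough to read off $g(0)=-\tfrac23$, $g'(0)=0$ (and, as a bonus, the coefficient $\tfrac{8}{15}$ that will reappear in $G$), and to make sure the exponential decay of $\cosh^{-2}\eta$ is invoked so the limits at infinity are justified rather than merely asserted.
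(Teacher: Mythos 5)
Your proof is correct, and since the paper dismisses this lemma with the single line ``A straightforward calculation yields the following,'' your write-up simply supplies the calculation the paper omits: the sign via $\eta<\tfrac12\sinh(2\eta)$, the series expansion $g(\eta)=-\tfrac23+\tfrac{8}{15}\eta^2-\cdots$ giving analyticity at $0$ and $g'(0)=0$, and the decay of $g$ and $g'$ at infinity from the boundedness of $\tanh$ and $\cosh^{-2}$ against negative powers of $\eta$. All the individual computations check out, so there is nothing to add.
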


\begin{lemma}\label{lm:FTFYW1}
The values of the partial derivatives $\displaystyle{\frac{\,\partial F\,}{\,\partial T\,}}$ and $\displaystyle{\frac{\,\partial F\,}{\,\partial Y\,}}$ exist at each point in $W_1$.
\end{lemma}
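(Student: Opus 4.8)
The plan is a direct application of the Leibniz rule for differentiation under the integral sign. Fix $(T_0,Y_0)\in W_1$. Since the inequalities defining $W_1$ are strict, choose $\delta>0$ so small that the closed rectangle $K=[T_0-\delta,\,T_0+\delta]\times[Y_0-\delta,\,Y_0+\delta]$ is contained in $\{(T,Y):0<T<T_c,\ 0<Y<2\Delta_0^2\}$. Then $T\geq T_0-\delta>0$ and $Y\geq Y_0-\delta>0$ on $K$, so $\xi^2+Y\geq Y_0-\delta>0$ for every $\xi\in[0,\hslash\omega_D]$ and every $(T,Y)\in K$. Hence on $K\times[0,\hslash\omega_D]$ the integrand $h(T,Y,\xi)=\frac{1}{\sqrt{\xi^2+Y}}\tanh\frac{\sqrt{\xi^2+Y}}{2k_BT}$ is a composition of $C^\infty$ functions whose argument stays bounded away from the singularity of $1/\sqrt{\cdot\,}$, so $h(\cdot,\cdot,\xi)$ is of class $C^1$ on $K$ for each fixed $\xi$.

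Next I would record the two partial derivatives. A straightforward computation gives, for $(T,Y)\in W_1$ and $\xi\in[0,\hslash\omega_D]$,
\[
\frac{\partial h}{\partial T}(T,Y,\xi)=-\,\frac{1}{2k_BT^2}\cdot\frac{1}{\cosh^2\dfrac{\sqrt{\xi^2+Y}}{2k_BT}},\qquad
\frac{\partial h}{\partial Y}(T,Y,\xi)=\frac{1}{16\,k_B^3T^3}\,g\!\left(\frac{\sqrt{\xi^2+Y}}{2k_BT}\right),
\]
with $g$ as in (\ref{eq:fng}); recording $\partial h/\partial Y$ in terms of $g$ will be convenient in later sections. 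Both right-hand sides are continuous on $K\times[0,\hslash\omega_D]$ — again because $\xi^2+Y$ and $T$ are each bounded below by a positive constant there — and in particular bounded on this compact set.

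Finally, since $[0,\hslash\omega_D]$ is a compact interval and $\partial h/\partial T$, $\partial h/\partial Y$ are continuous, each is bounded by a constant, which serves as an integrable dominating function on $[0,\hslash\omega_D]$ uniformly for $(T,Y)\in K$; thus the hypotheses of the theorem on differentiation under the integral sign are met. Therefore $\partial F/\partial T$ and $\partial F/\partial Y$ exist at $(T_0,Y_0)$ (the constant term $-1/(U_0N_0)$ contributing nothing) and are obtained by integrating the corresponding partial derivatives of $h$ over $[0,\hslash\omega_D]$. As $(T_0,Y_0)\in W_1$ was arbitrary, this proves the lemma. The one point that needs care — and the reason the statement is confined to $W_1$ — is the uniform positivity of $\xi^2+Y$: it fails on the boundary piece $W_3$ (where $Y=0$ and $\xi=0$ is a genuine singularity of $1/\sqrt{\xi^2}$), so that case requires a separate argument and is not claimed here. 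Once this positivity is secured, the remainder is the routine Leibniz estimate.
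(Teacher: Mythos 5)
Your proof is correct and follows essentially the same route as the paper: both compute $\partial h/\partial T=-\frac{1}{2k_BT^2}\cosh^{-2}\eta$ and $\partial h/\partial Y=\frac{g(\eta)}{2(2k_BT)^3}$ (your $16k_B^3T^3=2(2k_BT)^3$) and then justify differentiation under the integral sign by a constant dominating bound on the compact interval $[0,\hslash\omega_D]$, the paper using $\theta T_c<T<T_c$ and $\max_{\eta\geq 0}|g(\eta)|$ where you use a compact rectangle $K$. No further comment is needed.
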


\begin{proof}
Let $(T,\,Y)\in W_1$. Then there is a $\theta$ \  \  $(0<\theta<1)$ satisfying $\theta T_c<T<T_c$. Therefore,
\[
\left| \frac{\,\partial h\,}{\,\partial T\,}(T,\,Y,\,\xi) \right| \leq \frac{1}{\,2k_B\theta^2T_c^2\,}\,,
\]
where the right side is integrable on $[0,\,\hslash\omega_D]$. Hence the value
of $\displaystyle{\frac{\,\partial F\,}{\,\partial T\,}}$ exists at each point in $W_1$. Here,
\begin{equation}\label{eq:FT}
\frac{\,\partial F\,}{\,\partial T\,}(T,\,Y)=-\frac{1}{\,2k_BT^2\,}
 \int_0^{\hslash\omega_D} \frac{d\xi}{\,\cosh^2\eta\,}\,,
 \qquad \eta=\frac{\,\sqrt{\,\xi^2+Y\,}\,}{2k_BT}\,.
\end{equation}
On the other hand,
\[
\frac{\,\partial h\,}{\,\partial Y\,}(T,\,Y,\,\xi)=\frac{g(\eta)}{\,2\,(2k_BT)^3\,}\,,\qquad \eta=\frac{\,\sqrt{\,\xi^2+Y\,}\,}{2k_BT}\,.
\]
Hence, by Lemma \ref{gproperty},
\[
\left| \frac{\,\partial h\,}{\,\partial Y\,}(T,\,Y,\,\xi) \right|\leq
\frac{\,\displaystyle{ \max_{\eta\geq 0}|g(\eta)| }\,}{\,2\,(2k_B\theta T_c)^3\,}\,,
\]
where the right side is also integrable on $[0,\,\hslash\omega_D]$. Hence the value of $\displaystyle{\frac{\,\partial F\,}{\,\partial Y\,}}$ exists at each point in $W_1$. Here,
\begin{equation}\label{eq:FY}
\frac{\,\partial F\,}{\,\partial Y\,}(T,\,Y)=\frac{1}{\,2(2k_BT)^3\,}
 \int_0^{\hslash\omega_D} g(\eta)\, d\xi\,,
 \qquad \eta=\frac{\,\sqrt{\,\xi^2+Y\,}\,}{2k_BT}\,.
\end{equation}
\end{proof}

\begin{lemma}\label{lm:FTFYW}
The values of the partial derivatives $\displaystyle{\frac{\,\partial F\,}{\,\partial T\,}}$ and $\displaystyle{\frac{\,\partial F\,}{\,\partial Y\,}}$ exist at each point in $W$.
\end{lemma}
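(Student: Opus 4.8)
The plan is to reduce the statement to Lemma~\ref{lm:FTFYW1} together with a separate treatment of the boundary piece $W_2$, where $h$ changes its definition. By Lemma~\ref{lm:FTFYW1} the two partial derivatives already exist on $W_1$, so only $W_2$, $W_3$ and $W_4$ remain. At a point of $W_3$ or $W_4$ (resp.\ $W_2$) the derivative in the direction transverse to the boundary is to be read as a one-sided derivative, which is all the domain $W$ allows there; showing these one-sided derivatives exist is exactly the content of the lemma for those points.

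I would first dispose of $W_3$ and $W_4$, which need no new idea. Fix $(T_0,Y_0)\in W_3\cup W_4$. In a relative neighbourhood of this point inside $W$ one has $T\geq\theta T_c$ for some $\theta\in(0,1)$, so the very bounds used in Lemma~\ref{lm:FTFYW1},
\[
\Bigl|\,\frac{\partial h}{\partial T}(T,Y,\xi)\,\Bigr|\leq\frac{1}{\,2k_B\theta^2T_c^2\,}
\quad\text{and}\quad
\Bigl|\,\frac{\partial h}{\partial Y}(T,Y,\xi)\,\Bigr|\leq\frac{\,\displaystyle\max_{\eta\geq0}|g(\eta)|\,}{\,2\,(2k_B\theta T_c)^3\,},
\]
hold uniformly in $\xi\in[0,\hslash\omega_D]$. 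For $W_3$ note that at $Y=0$ (including $\xi=0$) the argument $\eta=\xi/(2k_BT)$ runs through $[0,\infty)$, on which $g$ is bounded and continuous by Lemma~\ref{gproperty}, so $\partial h/\partial Y$ is indeed well defined there as the relevant one-sided derivative. Since the right-hand sides are integrable on $[0,\hslash\omega_D]$, differentiation under the integral sign (one-sided where necessary) applies and shows that $\partial F/\partial T$ and $\partial F/\partial Y$ exist at $(T_0,Y_0)$, still given by \eqref{eq:FT} and \eqref{eq:FY}.

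The one genuinely new point is $W_2$. Fix $(0,Y_0)$ with $0<Y_0<2\Delta_0^2$. For the $Y$-derivative one uses, near $T=0$, the formula $h(0,Y,\xi)=(\xi^2+Y)^{-1/2}$, whence $\partial h/\partial Y(0,Y,\xi)=-\tfrac12(\xi^2+Y)^{-3/2}$, which is jointly continuous and bounded on $[0,\hslash\omega_D]\times(\tfrac12Y_0,\,2\Delta_0^2)$; the classical theorem then gives that $\partial F/\partial Y(0,Y_0)$ exists. For the one-sided $T$-derivative I would set $\psi(T)=F(T,Y_0)$ on $[0,T_c]$. Dominated convergence, using $|h(T,Y_0,\xi)|\leq Y_0^{-1/2}$ and $h(T,Y_0,\xi)\to h(0,Y_0,\xi)$ as $T\downarrow0$, shows $\psi$ is continuous at $0$. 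On $(0,T_c]$ Lemma~\ref{lm:FTFYW1} gives
\[
\psi'(T)=-\frac{1}{\,2k_BT^2\,}\int_0^{\hslash\omega_D}\frac{d\xi}{\,\cosh^2\eta\,},\qquad
\eta=\frac{\,\sqrt{\,\xi^2+Y_0\,}\,}{2k_BT}\,.
\]
Since $\eta\geq\sqrt{Y_0}/(2k_BT)$ and $1/\cosh^2\eta\leq4e^{-2\eta}$, the integrand is bounded by $\frac{2}{\,k_BT^2\,}\,e^{-\sqrt{Y_0}/(k_BT)}$, which tends to $0$ as $T\downarrow0$; hence $\psi'(T)\to0$. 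A function continuous on $[0,\varepsilon]$, differentiable on $(0,\varepsilon]$, whose derivative has a finite limit at $0$, is differentiable at $0$ with that limiting value (mean value theorem); thus $\psi'(0)=0$, i.e.\ $\partial F/\partial T(0,Y_0)=0$. Collecting the four cases proves the lemma.

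The main obstacle is precisely $W_2$: there $h$ switches formula at $T=0$, the crude bound $|\partial h/\partial T|\leq 1/(2k_BT^2)$ used elsewhere blows up as $T\downarrow0$, and one must instead exploit the exponential smallness of $1/\cosh^2\eta$ (equivalently of $1-\tanh\eta$) for large $\eta$ to conclude that the transverse derivative of $F$ exists, and in fact vanishes, at such points.
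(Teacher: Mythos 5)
Your proposal is correct and follows the same overall decomposition as the paper: $W_1$ is already covered by Lemma~\ref{lm:FTFYW1}, and the remaining boundary pieces $W_2$, $W_3$, $W_4$ are treated separately, with the derivatives transverse to the boundary read as one-sided derivatives. Your handling of $W_3$ and $W_4$ (domination by the same bounds as in Lemma~\ref{lm:FTFYW1} and differentiation under the integral sign) is exactly what the paper's ``similarly'' is standing in for, and your computation of $\partial F/\partial Y$ at $(0,Y_0)$ by dominated convergence of the difference quotient of $(\xi^2+Y)^{-1/2}$ matches the paper's derivation of (\ref{eq:FYonW}). The one place where you genuinely diverge is $\partial F/\partial T$ at $(0,Y_0)$: the paper bounds the difference quotient $|F(T,Y_0)-F(0,Y_0)|/T$ directly, using $1-\tanh\eta=O(\eta^{-2})$ to get the factor $4k_B^2T/Y_0$ that tends to $0$; you instead show $\psi(T)=F(T,Y_0)$ is continuous at $0$, compute $\psi'(T)$ for $T>0$ from (\ref{eq:FT}), show $\psi'(T)\to0$ using the exponential decay of $1/\cosh^2\eta$, and conclude via the mean value theorem that the one-sided derivative at $0$ exists and vanishes. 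Both routes exploit the same rapid decay of $1-\tanh\eta$ (equivalently $1/\cosh^2\eta$) as $T\downarrow0$; the paper's is more self-contained at the level of difference quotients, while yours reuses the already-computed formula for $\partial F/\partial T$ on $W_1$ at the cost of invoking the standard ``limit of the derivative'' corollary of the mean value theorem. Either is a complete proof.
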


\begin{proof}
We show that the values of $(\partial F/\partial T)$ and $(\partial F/\partial Y)$ exist at each point in $W_2$. Let $(0,\,Y_0)\in W_2$. Then
\begin{eqnarray}\nonumber
\left| \frac{\,F(T,\,Y_0)-F(0,\,Y_0)\,}{T} \right|
&\leq& \int_0^{\hslash\omega_D} \frac{1}{\,T\sqrt{\,\xi^2+Y_0\,}\,}
 \left( 1-\tanh\frac{\,\sqrt{\,\xi^2+Y_0\,}\,}{2k_BT} \right)\,d\xi \\ \nonumber&\leq& \frac{\,4k_B^2T\,}{Y_0} \int_0^{\hslash\omega_D}
 \frac{d\xi}{\,\sqrt{\,\xi^2+Y_0\,}\,}\,,
\end{eqnarray}
and hence
\[
\frac{\,\partial F\,}{\,\partial T\,}(0,\,Y_0)=0.
\]
On the other hand, for $Y>Y_0/2$,
\[
\frac{\,F(0,\,Y)-F(0,\,Y_0)\,}{Y-Y_0}=-\int_0^{\hslash\omega_D}
 \frac{d\xi}{\,\sqrt{\xi^2+Y}\sqrt{\xi^2+Y_0}
 \left( \sqrt{\xi^2+Y}+\sqrt{\xi^2+Y_0} \right)\,}\,.
\]
Note that
\[
\frac{1}{\,\sqrt{\xi^2+Y}\sqrt{\xi^2+Y_0}
 \left( \sqrt{\xi^2+Y}+\sqrt{\xi^2+Y_0} \right)\,}\leq
 \frac{\,2(\sqrt{2}-1)\,}{Y_0^{3/2}},
\]
where the right side is integrable on $[0,\,\hslash\omega_D]$. Therefore,
\begin{equation}\label{eq:FYonW}
\frac{\,\partial F\,}{\,\partial Y\,}(0,\,Y_0)
 =-\frac{1}{\,2\,}\int_0^{\hslash\omega_D}
 \frac{d\xi}{\,( \sqrt{\xi^2+Y_0} )^3\,}
 =-\frac{\hslash\omega_D}{\,2\,Y_0\sqrt{\hslash^2\omega_D^2+Y_0}\,}\,.
\end{equation}

Similarly we can show that those exist at each point in $W_3$, and in $W_4$. Their values are given as follows:\quad For $(T_0,\,0)\in W_3$,
\begin{eqnarray}\nonumber
\frac{\,\partial F\,}{\,\partial T\,}(T_0,\,0)
&=&-\frac{1}{\,2k_BT_0^2\,}\int_0^{\hslash\omega_D}
 \frac{d\xi}{\,\cosh^2 \frac{\xi}{\,2k_BT_0\,}\,},\\ \nonumber
\frac{\,\partial F\,}{\,\partial Y\,}(T_0,\,0)
&=&\frac{1}{\,2(2k_BT_0)^3\,}\int_0^{\hslash\omega_D}
 g\left( \frac{\xi}{\,2k_BT_0\,} \right)\, d\xi\,,
\end{eqnarray}
and for $(T_c\,,\,Y_0)\in W_4$,
\begin{eqnarray}\nonumber
\frac{\,\partial F\,}{\,\partial T\,}(T_c\,,\,Y_0)
&=&-\frac{1}{\,2k_BT_c^2\,}\int_0^{\hslash\omega_D}
 \frac{d\xi}{\,\cosh^2 \frac{\,\sqrt{\xi^2+Y_0}\,}{\,2k_BT_c\,}\,},\\ \nonumber
\frac{\,\partial F\,}{\,\partial Y\,}(T_c\,,\,Y_0)
&=&\frac{1}{\,2(2k_BT_c)^3\,}\int_0^{\hslash\omega_D}
 g\left( \frac{\,\sqrt{\xi^2+Y_0}\,}{\,2k_BT_c\,} \right)\, d\xi\,.
\end{eqnarray}
The result follows.
\end{proof}

Lemmas \ref{gproperty}, \ref{lm:FTFYW1} and \ref{lm:FTFYW} immediately give the following.
\begin{lemma}\label{lm:FTFYminus}\quad At each $(T,\,Y)\in W\setminus W_2$,
\[
\frac{\,\partial F\,}{\,\partial T\,}(T,\,Y)<0,\qquad
\frac{\,\partial F\,}{\,\partial Y\,}(T,\,Y)<0.
\]
\end{lemma}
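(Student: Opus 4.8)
The plan is to read off both inequalities directly from the explicit integral representations of the first-order partial derivatives obtained in Lemmas \ref{lm:FTFYW1} and \ref{lm:FTFYW}. Note first that $W\setminus W_2=W_1\cup W_3\cup W_4$, and that on each of these three pieces one has $T>0$: on $W_1$, $0<T<T_c$; on $W_3$, $0<T\leq T_c$; on $W_4$, $T=T_c>0$. The point of excluding $W_2$ is precisely that there $T=0$, where $\partial F/\partial T$ vanishes (it equals $0$, not a negative number, by the proof of Lemma \ref{lm:FTFYW}) and where the formula for $\partial F/\partial Y$ takes the different form \eqref{eq:FYonW}. Thus on $W\setminus W_2$ the formulas \eqref{eq:FT} and \eqref{eq:FY}, together with their analogues on $W_3$ and $W_4$ recorded in the proof of Lemma \ref{lm:FTFYW}, are all available.

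For $\partial F/\partial T$, every one of these formulas has the shape
\[
\frac{\,\partial F\,}{\,\partial T\,}(T,\,Y)=-\frac{1}{\,2k_BT^2\,}\int_0^{\hslash\omega_D}\frac{d\xi}{\,\cosh^2\eta\,},
\]
with the appropriate substitution $\eta=\sqrt{\,\xi^2+Y\,}/(2k_BT)$ (taking $Y=0$ on $W_3$ and $T=T_c$ on $W_4$). Since the integrand $1/\cosh^2\eta$ is continuous and strictly positive on $[0,\,\hslash\omega_D]$, the integral is strictly positive; as $T>0$, the prefactor $-1/(2k_BT^2)$ is strictly negative, and hence $\partial F/\partial T(T,\,Y)<0$ on $W\setminus W_2$.

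For $\partial F/\partial Y$, every formula has the shape
\[
\frac{\,\partial F\,}{\,\partial Y\,}(T,\,Y)=\frac{1}{\,2(2k_BT)^3\,}\int_0^{\hslash\omega_D}g(\eta)\,d\xi,
\]
with the same substitution. By Lemma \ref{gproperty} one has $g(\eta)<0$ for every $\eta\geq 0$; since $g$ is moreover continuous on $[0,\,\infty)$ with $g(0)=-2/3<0$, the value $\xi=0$ that occurs on $W_3$ causes no difficulty, and in all cases $\int_0^{\hslash\omega_D}g(\eta)\,d\xi<0$. As $T>0$ makes the prefactor $1/(2(2k_BT)^3)$ strictly positive, we conclude $\partial F/\partial Y(T,\,Y)<0$ on $W\setminus W_2$. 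There is no genuine obstacle here; the only thing requiring a moment's care is to confirm that the strict negativity of $g$ supplied by Lemma \ref{gproperty} is in force at the endpoint $\eta=0$ relevant to $W_3$, which is immediate from $g(0)=-2/3$.
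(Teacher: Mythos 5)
Your proof is correct and is exactly the argument the paper intends: the paper derives Lemma \ref{lm:FTFYminus} ``immediately'' from Lemmas \ref{gproperty}, \ref{lm:FTFYW1} and \ref{lm:FTFYW}, i.e.\ from the sign of the prefactors in \eqref{eq:FT}, \eqref{eq:FY} and their $W_3$, $W_4$ analogues together with $g(\eta)<0$. Your added remarks on why $W_2$ is excluded and on the endpoint $\eta=0$ are accurate but not different in substance from the paper's route.
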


We now study the continuity of the functions $F$, $(\partial F/\partial T)$ and $(\partial F/\partial Y)$ on $W$.

\begin{lemma}\label{lm:FC1onW1}
The partial derivatives $\displaystyle{\frac{\,\partial F\,}{\,\partial T\,}}$ and $\displaystyle{\frac{\,\partial F\,}{\,\partial Y\,}}$ are continuous on $W_1$. Consequently, the function $F$ is of class $C^1$ on $W_1$.
\end{lemma}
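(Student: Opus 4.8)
The goal is to show that $\partial F/\partial T$ and $\partial F/\partial Y$, whose pointwise values on $W_1$ were already computed in Lemma~\ref{lm:FTFYW1} as
\[
\frac{\partial F}{\partial T}(T,Y)=-\frac{1}{\,2k_BT^2\,}\int_0^{\hslash\omega_D}\frac{d\xi}{\,\cosh^2\eta\,},\qquad
\frac{\partial F}{\partial Y}(T,Y)=\frac{1}{\,2(2k_BT)^3\,}\int_0^{\hslash\omega_D} g(\eta)\,d\xi,
\]
with $\eta=\sqrt{\xi^2+Y}/(2k_BT)$, depend continuously on $(T,Y)\in W_1$; the $C^1$ conclusion then follows from the standard theorem that a function with continuous first partials is continuously differentiable. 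So the plan is to fix an arbitrary $(T_0,Y_0)\in W_1$, take a sequence (or, better, an arbitrary point) $(T,Y)\to(T_0,Y_0)$ staying in $W_1$, and pass to the limit under the integral sign in each of the two formulas above.

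The mechanism for passing to the limit is dominated convergence. For the integrand of $\partial F/\partial T$, the map $(T,Y,\xi)\mapsto 1/\cosh^2\!\bigl(\sqrt{\xi^2+Y}/(2k_BT)\bigr)$ is manifestly continuous in $(T,Y)$ for each fixed $\xi\in[0,\hslash\omega_D]$ (composition of continuous functions, with $T>0$), and it is bounded by $1$; since $[0,\hslash\omega_D]$ has finite measure, the constant $1$ is an integrable dominating function, so $\int_0^{\hslash\omega_D}\cosh^{-2}\eta\,d\xi$ is continuous in $(T,Y)$. Multiplying by the continuous prefactor $-1/(2k_BT^2)$ gives continuity of $\partial F/\partial T$ on $W_1$. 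For $\partial F/\partial Y$ the argument is the same: $g$ is continuous on $[0,\infty)$ by Lemma~\ref{gproperty}, hence $(T,Y,\xi)\mapsto g(\eta)$ is continuous in $(T,Y)$ for each $\xi$, and it is dominated by the constant $\max_{\eta\ge0}|g(\eta)|$ (finite, again by Lemma~\ref{gproperty}, which gives $g(0)=-2/3$ and $g(\eta)\to0$), which is integrable on the bounded interval; so $\int_0^{\hslash\omega_D} g(\eta)\,d\xi$ is continuous in $(T,Y)$, and multiplication by the continuous prefactor $1/(2(2k_BT)^3)$ finishes it.

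I do not anticipate a genuine obstacle here: the only mild point to be careful about is that the dominating bounds must be uniform over a neighborhood of $(T_0,Y_0)$ in $W_1$, but since the bounds $1$ and $\max_{\eta\ge0}|g(\eta)|$ are \emph{absolute} constants, independent of $(T,Y)$ altogether, uniformity is automatic and no shrinking of the neighborhood is needed (unlike in Lemma~\ref{lm:FTFYW1}, where the bound $1/(2k_B\theta^2T_c^2)$ did depend on how close $T$ is allowed to get to $0$). The closedness of $W_1$ under the limit is not required either, since $W_1$ is open and we are only asserting continuity \emph{on} $W_1$. Thus the whole proof is: write down the two formulas from Lemma~\ref{lm:FTFYW1}, invoke continuity of the integrands in $(T,Y)$ together with the absolute integrable bounds $1$ and $\max_{\eta\ge0}|g(\eta)|$, apply the dominated convergence theorem to get continuity of the two integrals, multiply by the continuous prefactors, and conclude $F\in C^1(W_1)$ by the elementary sufficient condition for differentiability.
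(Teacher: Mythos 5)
Your proof is correct, but it runs on a different engine than the paper's. The paper does not use the dominated convergence theorem at this point: it proves continuity of the two integrals $I_1(T,Y)=\int_0^{\hslash\omega_D}\cosh^{-2}\eta\,d\xi$ and $I_2(T,Y)=\int_0^{\hslash\omega_D}g(\eta)\,d\xi$ by deriving explicit Lipschitz-type bounds, of the form
\[
\left| I_i(T,Y)-I_i(T_0,Y_0)\right|\;\leq\; C\left( \frac{\sqrt{\hslash^2\omega_D^2+2\Delta_0^2}}{k_BT_0^2}\,|T-T_0|+\frac{|Y-Y_0|}{k_BT_0\sqrt{Y_0}}\right),
\]
obtained by applying the mean value theorem to $\cosh$ and to $g$ (the latter using $\max_{\eta\ge 0}|g'(\eta)|$ from Lemma~\ref{gproperty}) after restricting to $T>T_0/2$. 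Your route instead uses only the pointwise continuity of the integrands in $(T,Y)$ for each fixed $\xi$ together with the absolute integrable dominations $1$ and $\max_{\eta\ge 0}|g(\eta)|$ on the finite interval $[0,\hslash\omega_D]$, and then invokes dominated convergence; this is shorter, needs no bound on $g'$, and — as you correctly note — requires no shrinking of the neighborhood since the dominating constants do not depend on $(T,Y)$ at all. What the paper's version buys in exchange is quantitative (locally Lipschitz) control of the partial derivatives, and it keeps the argument at the level of elementary estimates of exactly the kind reused later (e.g.\ in Lemmas~\ref{lm:FC1onW} and~\ref{lm:FC2W1}). Both arguments are valid, and your final step — concluding $F\in C^1(W_1)$ from existence plus continuity of the first partials — matches the paper.
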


\begin{proof}
It is enough to show that the functions: $(T,\,Y)\mapsto I_1(T,\,Y)$ and $(T,\,Y)\mapsto I_2(T,\,Y)$ (see (\ref{eq:FT}) and (\ref{eq:FY})) are continuous at $(T_0,\,Y_0)\in W_1$. Here,
\begin{equation}\label{eq:I1I2}
I_1(T,\,Y)=\int_0^{\hslash\omega_D} \frac{d\xi}{\,\cosh^2\eta\,}\,,
\quad I_2(T,\,Y)=\int_0^{\hslash\omega_D} g(\eta)\, d\xi\,,\quad
\eta=\frac{\,\sqrt{\,\xi^2+Y\,}\,}{2k_BT}\,.
\end{equation}
Set $\displaystyle{\eta_0=\frac{\,\sqrt{\,\xi^2+Y_0\,}\,}{2k_BT_0}}$.\quad
Since $(T,\,Y)\in W_1$ is close to $(T_0,\,Y_0)\in W_1$, it follows that
$T>T_0/2$. Then
\begin{eqnarray}\nonumber
& &\left| I_1(T,\,Y)-I_1(T_0,\,Y_0)\right| \\ \nonumber
&\leq& \int_0^{\hslash\omega_D}
 \left|\left( \frac{1}{\,\cosh\eta\,}+\frac{1}{\,\cosh\eta_0\,} \right)
 \frac{\,\cosh\eta-\cosh\eta_0\,}{\,\cosh\eta\,\cosh\eta_0\,} \right|
 \,d\xi \\ \nonumber
&\leq& 2\hslash\omega_D\sinh
 \frac{\,\sqrt{\,\hslash^2\omega_D^2+2\,\Delta_0^2\,}\,}{k_BT_0}
 \left( \frac{\,\sqrt{\,\hslash^2\omega_D^2+2\,\Delta_0^2\,}\,}{k_BT_0^2}|T-T_0|+\frac{|Y-Y_0|}{\,k_BT_0\sqrt{Y_0}\,}\right), \\ \nonumber
& &\left| I_2(T,\,Y)-I_2(T_0,\,Y_0)\right| \\ \nonumber
&\leq& \int_0^{\hslash\omega_D} \left| g(\eta)-g(\eta_0) \right|\,d\xi
 \\ \nonumber
&\leq& \hslash\omega_D\,\displaystyle{ \max_{\eta\geq 0}|g'(\eta)| }
 \left( \frac{\,\sqrt{\,\hslash^2\omega_D^2+2\,\Delta_0^2\,}\,}{k_BT_0^2}|T-T_0|+\frac{|Y-Y_0|}{\,k_BT_0\sqrt{Y_0}\,}\right).
\end{eqnarray}
Thus the functions: $(T,\,Y)\mapsto I_1(T,\,Y)$ and $(T,\,Y)\mapsto I_2(T,\,Y)$, and hence $(\partial F/\partial T)$ and $(\partial F/\partial Y)$ are continuous at $(T_0,\,Y_0)\in W_1$.
\end{proof}

\begin{lemma}
\quad The function $F$ is continuous on $W$.
\end{lemma}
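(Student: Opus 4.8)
The plan is to combine Lemma~\ref{lm:FC1onW1}, which (since $W_1$ is open in $\mathbb{R}^2$) already gives continuity of $F$ at every point of $W_1$, with a dominated-convergence argument at each point of the three remaining pieces $W_2$, $W_3$, $W_4$. So I would fix $p\in W_2\cup W_3\cup W_4$ and let $(T_n,Y_n)$ be an arbitrary sequence in $W$ converging to $p$. From the description of $W$ one reads off that, for $n$ large, the points $(T_n,Y_n)$ lie only in $W_1$ together with the single piece containing $p$ (a point of $W_2$ is reached only through $W_1\cup W_2$; a point of $W_3$ only through $W_1\cup W_3$, with also $W_4$ allowed when $p=(T_c,0)$; a point of $W_4$ only through $W_1\cup W_4$), because the remaining pieces lie on the coordinate lines $T=0$, $Y=0$ or $T=T_c$, each at a positive distance from $p$. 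In every case $h(T_n,Y_n,\xi)$ converges, for each fixed $\xi\in[0,\hslash\omega_D]$, to the corresponding value of $h$ at $p$: near $W_3$ and $W_4$ this is just joint continuity of $h$ at $(p,\xi)$, since the relevant arguments stay away from the origin, while near a point $(0,Y_0)\in W_2$ one uses $\sqrt{\xi^2+Y_n}\ge\sqrt{Y_0/2}>0$, so that $\sqrt{\xi^2+Y_n}/(2k_BT_n)\to+\infty$, hence $\tanh\bigl(\sqrt{\xi^2+Y_n}/(2k_BT_n)\bigr)\to1$ and $h(T_n,Y_n,\xi)\to 1/\sqrt{\xi^2+Y_0}$.

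Next I would produce a constant (hence integrable on $[0,\hslash\omega_D]$) dominating function in each case. Near a point $(0,Y_0)\in W_2$ one has $Y_n>Y_0/2$ eventually, so by $\tanh\le1$, $|h(T_n,Y_n,\xi)|\le 1/\sqrt{\xi^2+Y_n}\le (Y_0/2)^{-1/2}$. Near a point $(T_0,0)\in W_3$ one has $T_n>T_0/2$ eventually; writing $h(T,Y,\xi)=\dfrac{1}{2k_BT}\cdot\dfrac{\tanh v}{v}$ with $v=\sqrt{\xi^2+Y}/(2k_BT)$ and using $\tanh v\le v$ gives $|h(T_n,Y_n,\xi)|\le \dfrac{1}{2k_BT_n}\le\dfrac{1}{k_BT_0}$; the same estimate with $T_0$ replaced by $T_c$ works near a point of $W_4$ (there one could alternatively dominate by $(Y_0/2)^{-1/2}$). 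In each case the dominated convergence theorem applies on the compact interval $[0,\hslash\omega_D]$ and yields $F(T_n,Y_n)\to F(p)$; since $(T_n,Y_n)$ was arbitrary, $F$ is continuous at $p$. Together with continuity on $W_1$ this proves that $F$ is continuous on $W$.

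The argument is routine, and the only place that needs a moment's care — the main obstacle — is the interplay of the two degenerate limits built into the domain $W$: on $W_2$ one must be sure that $\tanh\bigl(\sqrt{\xi^2+Y}/(2k_BT)\bigr)$ genuinely tends to $1$ as $T\downarrow0$, which is why the bound $Y>Y_0/2$ (keeping the numerator away from $0$) is invoked; and on $W_3$ one must check that the apparent singularity of $1/\sqrt{\xi^2+Y}$ at $\xi=0$ as $Y\downarrow0$ is harmless, because it is exactly compensated by the $\tanh$ factor vanishing linearly — the content of the elementary inequality $\tanh v\le v$. Once these two observations are recorded, the dominating constants above are immediate and the proof reduces to a single application of dominated convergence.
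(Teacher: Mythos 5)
Your proof is correct and follows essentially the same route as the paper: continuity on $W_1$ is taken from Lemma \ref{lm:FC1onW1}, and each boundary piece $W_2$, $W_3$, $W_4$ is treated separately after localizing ($Y>Y_0/2$ near $W_2$, $T>T_0/2$ near $W_3$). The only difference is cosmetic — the paper writes explicit uniform-in-$\xi$ bounds on the difference of integrands (carrying out only the $W_2$ case and saying ``similarly'' for $W_3$ and $W_4$), whereas you invoke dominated convergence with a constant dominating function; your observation that $\tanh v\le v$ neutralizes the $1/\sqrt{\xi^2+Y}$ factor near $W_3$ is precisely the point the paper leaves implicit.
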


\begin{proof}
Note that $F$ is continuous on $W_1$ by Lemma \ref{lm:FC1onW1}. We then show that $F$ is continuous on $W_2$.

Let $(0,\,Y_0)\in W_2$ and let $(T,\,Y)\in W_1\cup W_2$. Since $(T,\,Y)$ is close to $(0,\,Y_0)$, it follows that $Y>Y_0/2$. Then, by (\ref{eq:functionF}),
\begin{eqnarray}\nonumber
& &\left| F(T,\,Y)-F(0,\,Y_0)\right| \\ \nonumber
&\leq& \int_0^{\hslash\omega_D} \left\{
 \frac{\,1-\tanh\frac{\,\sqrt{\,\xi^2+Y\,}\,}{2k_BT}\,}{\sqrt{\,\xi^2+Y_0\,}}+\left| \frac{1}{\,\sqrt{\,\xi^2+Y\,}\,}
 -\frac{1}{\,\sqrt{\,\xi^2+Y_0\,}\, } \right| \right\}\,d\xi
 \\ \nonumber
&\leq& \hslash\omega_D \left\{
 \frac{1}{\,\sqrt{Y_0}\,}
 \left( 1-\tanh\frac{\,\sqrt{\,Y_0/2\,}\,}{2k_BT} \right)
 +\frac{\,2\,|Y-Y_0|\,}{\,(\sqrt{2}+1)Y_0^{3/2}\,} \right\}.
\end{eqnarray}
Thus $F$ is continuous on $W_2$. Similarly we can show the continuity of $F$ on $W_3$, and on $W_4$.
\end{proof}

\begin{lemma}\label{lm:FC1onW}
The partial derivatives $\displaystyle{\frac{\,\partial F\,}{\,\partial T\,}}$ and $\displaystyle{\frac{\,\partial F\,}{\,\partial Y\,}}$ are continuous on $W$. Consequently, the function $F$ is of class $C^1$ on $W$.
\end{lemma}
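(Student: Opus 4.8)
The plan is the following. By Lemma \ref{lm:FC1onW1} the partial derivatives $\partial F/\partial T$ and $\partial F/\partial Y$ are already continuous on $W_1$, so it remains only to check continuity at each point of the boundary strata $W_2$, $W_3$ and $W_4$. I would split this into two cases according to whether $T$ stays bounded away from $0$ (the strata $W_3$ and $W_4$) or $T\to 0$ (the stratum $W_2$), the latter being the substantive one.

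For the case in which $T$ stays bounded away from $0$, note that on $W\cap\{T>0\}=W_1\cup W_3\cup W_4$ the formulas (\ref{eq:FT}) and (\ref{eq:FY}) read $\partial F/\partial T=-I_1(T,Y)/(2k_BT^2)$ and $\partial F/\partial Y=I_2(T,Y)/\bigl(2(2k_BT)^3\bigr)$, where
\[
I_1(T,Y)=\int_0^{\hslash\omega_D}\frac{d\xi}{\cosh^2\eta},\qquad
I_2(T,Y)=\int_0^{\hslash\omega_D}g(\eta)\,d\xi,\qquad
\eta=\frac{\sqrt{\xi^2+Y}}{2k_BT}.
\]
Hence it suffices to prove that $I_1$ and $I_2$ are continuous at every $(T_0,Y_0)$ with $T_0>0$ and $Y_0\ge 0$, which covers $W_1$, $W_3$ and $W_4$ at once. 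This is done exactly as in the proof of Lemma \ref{lm:FC1onW1}: for $(T,Y)$ close to $(T_0,Y_0)$ one has $T>T_0/2$, and using the uniform bounds $\bigl|(1/\cosh^2\eta)'\bigr|\le 2$ and $|g'(\eta)|\le\max_{\eta\ge 0}|g'(\eta)|<\infty$ (Lemma \ref{gproperty}) one estimates $|I_j(T,Y)-I_j(T_0,Y_0)|$ by a constant multiple of $|\eta-\eta_0|$ integrated over $[0,\hslash\omega_D]$, where $\eta_0=\sqrt{\xi^2+Y_0}/(2k_BT_0)$. The only extra care is at $Y_0=0$: there $|\sqrt{\xi^2+Y}-\sqrt{\xi^2}|\le\sqrt{Y}$, so near $\xi=0$ one gets only a square-root (rather than Lipschitz) modulus of continuity in $Y$, which still yields continuity.

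The main obstacle is continuity at a point $(0,Y_0)\in W_2$; since such a point can only be approached within $W$ through $W_1\cup W_2$, this amounts to letting $(T,Y)\to(0,Y_0)$ with $Y>Y_0/2$ along $W_1$. For $\partial F/\partial T$ I would use $1/\cosh^2\eta\le 4e^{-2\eta}$ together with $\eta\ge\sqrt{Y_0/2}/(2k_BT)$ to obtain
\[
\left|\frac{\partial F}{\partial T}(T,Y)\right|=\frac{I_1(T,Y)}{2k_BT^2}
\le\frac{2\hslash\omega_D}{k_BT^2}\,e^{-\sqrt{Y_0/2}/(k_BT)}\longrightarrow 0
=\frac{\partial F}{\partial T}(0,Y_0).
\]
For $\partial F/\partial Y$ I would use the algebraic decomposition $g(\eta)=\dfrac{1}{\eta^2\cosh^2\eta}-\dfrac{\tanh\eta}{\eta^3}$, which converts (\ref{eq:FY}) into
\[
\frac{\partial F}{\partial Y}(T,Y)=\int_0^{\hslash\omega_D}\left(
\frac{1}{2(2k_BT)(\xi^2+Y)\cosh^2\eta}-\frac{\tanh\eta}{2(\xi^2+Y)^{3/2}}\right)d\xi.
\]
For each fixed $\xi\in(0,\hslash\omega_D]$ the integrand tends to $-\tfrac12(\xi^2+Y_0)^{-3/2}$ as $(T,Y)\to(0,Y_0)$ (the first term is exponentially small, and $\tanh\eta\to 1$), while $\eta^3|g(\eta)|=\bigl|\,\eta/\cosh^2\eta-\tanh\eta\,\bigr|$ is bounded on $[0,\infty)$, so the integrand is dominated by a constant multiple of $(\xi^2+Y)^{-3/2}\le(Y_0/2)^{-3/2}$, which is integrable on $[0,\hslash\omega_D]$. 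By dominated convergence,
\[
\frac{\partial F}{\partial Y}(T,Y)\longrightarrow-\frac{1}{\,2\,}\int_0^{\hslash\omega_D}\frac{d\xi}{(\sqrt{\xi^2+Y_0})^3}=\frac{\partial F}{\partial Y}(0,Y_0),
\]
the last equality being (\ref{eq:FYonW}); since $Y_0\mapsto(\partial F/\partial Y)(0,Y_0)$ is visibly continuous inside $W_2$, this gives continuity at every point of $W_2$. The delicate step is this last computation: it hinges on isolating the surviving term via the decomposition of $g$, on manufacturing a $\xi$-uniform integrable dominator from the boundedness of $\eta\mapsto\eta^3 g(\eta)$, and on matching the limit with the explicit value (\ref{eq:FYonW}). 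With $\partial F/\partial T$ and $\partial F/\partial Y$ now continuous on all of $W$, and $F$ continuous on $W$ by the preceding lemma, we conclude that $F\in C^1(W)$.
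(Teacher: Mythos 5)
Your proposal is correct, and for the part of $W$ where $T$ stays bounded away from $0$ (the closure strata $W_3$ and $W_4$ together with $W_1$) it follows essentially the same route as the paper: mean-value/Lipschitz estimates on $\xi\mapsto 1/\cosh^2\eta$ and on $g$, controlling $|\eta-\eta_0|$ by $|T-T_0|$ and $|Y-Y_0|$ (the paper carries this out explicitly only at the single point $(T_c,0)$ and declares the remaining points ``similar''). Where you genuinely diverge is at the stratum $W_2$, i.e.\ the limit $T\downarrow 0$. The paper's Lipschitz-type bounds involve factors such as $\sinh\eta_0$ and $\cosh\eta_0$ evaluated at arguments that blow up as $T\to 0$, so the ``similar'' argument cannot be transplanted verbatim there; you replace it by (i) the exponential bound $1/\cosh^2\eta\le 4e^{-2\eta}$ with $\eta\ge\sqrt{Y_0/2}/(2k_BT)$ to kill $(\partial F/\partial T)$, and (ii) the decomposition $g(\eta)=1/(\eta^2\cosh^2\eta)-\tanh\eta/\eta^3$, which rewrites (\ref{eq:FY}) in a form whose integrand converges pointwise to $-\tfrac12(\xi^2+Y_0)^{-3/2}$ and is dominated uniformly because $\eta^3 g(\eta)=\eta/\cosh^2\eta-\tanh\eta$ is bounded, so dominated convergence matches the limit with the explicit value (\ref{eq:FYonW}). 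This buys a complete and self-contained treatment of the most degenerate boundary piece, at the cost of a case split and an appeal to dominated convergence rather than a single uniform Lipschitz estimate; it is, if anything, more careful than the paper at exactly the point the paper leaves implicit. The only cosmetic caveat is your passing claim of a square-root modulus of continuity in $Y$ near $Y_0=0$, which is harmless for continuity and is in fact also what the paper's own $\sqrt{Y}$-bound at $(T_c,0)$ reflects.
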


\begin{proof}
Note that $(\partial F/\partial T)$ and $(\partial F/\partial Y)$ are continuous on $W_1$ by Lemma \ref{lm:FC1onW1}. We then show that $(\partial F/\partial T)$ and $(\partial F/\partial Y)$ are continuous at $(T_c\,,\,0)\in W_3$. We can show their continuity at other points in $W$ similarly.

\textit{Step 1}. Let $(T,\,Y)\in W_1$. We show
\[
\frac{\,\partial F\,}{\,\partial T\,}(T,\,Y) \to
 \frac{\,\partial F\,}{\,\partial T\,}(T_c\,,\,0),\;\;
\frac{\,\partial F\,}{\,\partial Y\,}(T,\,Y) \to
\frac{\,\partial F\,}{\,\partial Y\,}(T_c\,,\,0) \quad \mbox{as}\;
(T,\,Y) \to (T_c\,,\,0).
\]
Since $(T,\,Y)$ is close to $(T_c\,,\,0)$, it then follows that $T_c/2<T<T_c$. Set $\eta_0=\frac{\,\sqrt{\hslash^2\omega_D^2+2\,\Delta_0^2}\,}{k_BT_c}$. Then
\begin{eqnarray}\nonumber
& & \left| \frac{1}{\,T^2\cosh^2 \frac{\,\sqrt{\xi^2+Y}\,}{\,2k_BT\,}\,}
-\frac{1}{\,T_c^2\cosh^2 \frac{\,\xi\,}{\,2k_BT_c\,} \,} \right| \\ \nonumber
&\leq& \frac{\,8\cosh \eta_0\,}{T_c^3} \left\{ \left| T-T_c \right|
 \left( \cosh \eta_0+\eta_0 \sinh \eta_0 \right)
 +\frac{\,\sqrt{Y}\,}{4k_B} \sinh \eta_0 \right\},
\end{eqnarray}
and hence \quad $\displaystyle{
(\partial F/\partial T)(T,\,Y)-(\partial F/\partial T)(T_c\,,\,0) \to 0 \quad \mbox{as}\quad (T,\,Y) \to (T_c\,,\,0)}$.

Since
\[
\left| g\left( \frac{\,\sqrt{\,\xi^2+Y\,}\,}{2k_BT} \right)
 -g\left( \frac{\,\xi\,}{\,2k_BT_c\,} \right) \right|
\leq \max_{\eta\geq 0}|g'(\eta)| \left(
 \frac{\,\hslash\omega_D\left| T-T_c \right|\,}{k_BT_c^2}+
 \frac{\,\sqrt{Y}\,}{\,k_BT_c\,} \right),
\]
it follows that
\[
\int_0^{\hslash\omega_D} \left\{
 g\left( \frac{\,\sqrt{\,\xi^2+Y\,}\,}{2k_BT} \right)
 -g\left( \frac{\,\xi\,}{\,2k_BT_c\,} \right) \right\}\,d\xi \to 0 \quad
 \mbox{as}\quad (T,\,Y) \to (T_c\,,\,0),
\]
and hence \quad $\displaystyle{
(\partial F/\partial Y)(T,\,Y)-(\partial F/\partial Y)(T_c\,,\,0) \to 0 \quad \mbox{as}\quad (T,\,Y) \to (T_c\,,\,0)}$.

\textit{Step 2}. When $(T,\,Y)=(T,\,0)\in W_3$ and $(T,\,Y)=(T_c\,,\,Y)\in W_4$, an argument similar to that in Step 1 gives
\[
\frac{\,\partial F\,}{\,\partial T\,}(T,\,0) \to
 \frac{\,\partial F\,}{\,\partial T\,}(T_c\,,\,0),\quad
\frac{\,\partial F\,}{\,\partial Y\,}(T,\,0) \to
\frac{\,\partial F\,}{\,\partial Y\,}(T_c\,,\,0) \quad \mbox{as}\quad
(T,\,0) \to (T_c\,,\,0)
\]
and
\[
\frac{\,\partial F\,}{\,\partial T\,}(T_c\,,\,Y) \to
 \frac{\,\partial F\,}{\,\partial T\,}(T_c\,,\,0),\quad
\frac{\,\partial F\,}{\,\partial Y\,}(T_c\,,\,Y) \to
\frac{\,\partial F\,}{\,\partial Y\,}(T_c\,,\,0)
\]
as \quad $(T_c\,,\,Y) \to (T_c\,,\,0)$. The result follows.
\end{proof}

\bigskip

%%%%%%%%%%%%%%%%%%%%%%%%%%%%%%%%%%%%%%%%%%%%%%%%%%%%% 4
\noindent \textbf{IV. THE SECOND-ORDER PARTIAL DERIVATIVES OF $F$}

\medskip

In this section we deal with the second-order partial derivatives of the function $F$ and show that $F$ is of class $C^2$ on $W_1$.

A straightforward calculation yields the following.
\begin{lemma}\label{lm:cgproperty}
Let $G$ be as in (\ref{eq:fncg}) and $g$ as in (\ref{eq:fng}). Then the
function $G$ is of class $C^1$ on $[0,\,\infty)$ and satisfies
\[
g'(\eta)=-\eta G(\eta),\qquad G'(0)=0,\qquad
\lim_{\eta\to\infty}G(\eta)=\lim_{\eta\to\infty}G'(\eta)=0.
\]
\end{lemma}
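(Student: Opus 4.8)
The plan is to treat the interior $(0,\infty)$, the left endpoint $\eta=0$, and the regime $\eta\to\infty$ separately. On $(0,\infty)$ both $g$ and $G$ are visibly of class $C^\infty$, being algebraic combinations of $\tanh\eta$, $1/\cosh^2\eta$ and positive powers of $1/\eta$; thus the substance of the lemma is the identity $g'(\eta)=-\eta\,G(\eta)$, the crossing of $\eta=0$, and the decay at $+\infty$.

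First I would establish $g'(\eta)=-\eta\,G(\eta)$ on $(0,\infty)$ by direct differentiation. Writing $g(\eta)=\eta^{-2}\cosh^{-2}\eta-\eta^{-3}\tanh\eta$ and using $(\tanh\eta)'=\cosh^{-2}\eta$ and $(\cosh^{-2}\eta)'=-2\tanh\eta\,\cosh^{-2}\eta$, one gets
\[
g'(\eta)=-3\,\eta^{-3}\cosh^{-2}\eta-2\,\eta^{-2}\tanh\eta\,\cosh^{-2}\eta+3\,\eta^{-4}\tanh\eta ,
\]
and a one-line regrouping identifies the right-hand side with $-\eta^{-1}\bigl(3g(\eta)+2\eta^{-1}\tanh\eta\,\cosh^{-2}\eta\bigr)=-\eta\,G(\eta)$. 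At $\eta=0$ both sides vanish: the left by Lemma \ref{gproperty}, the right because $G(0)=-16/15$ is finite. Hence the identity holds on all of $[0,\infty)$, and in particular $G(\eta)=-g'(\eta)/\eta$ for $\eta>0$.

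To cross $\eta=0$ I would observe that $q(\eta):=\eta^{3}g(\eta)=\eta\,\cosh^{-2}\eta-\tanh\eta$ is real-analytic on $\mathbb{R}$ (the denominator $\cosh\eta$ never vanishes), is odd, and — as a short check of $q',q'',q'''$ at $0$ shows — vanishes to order exactly three at the origin. Hence $q(\eta)/\eta^{3}$ is real-analytic near $0$; it coincides with $g$ on $(0,\infty)$, so $g$ extends real-analytically and evenly across $0$. Consequently $g'$ is real-analytic and odd, so $g'(\eta)/\eta$ is real-analytic near $0$, and therefore $G=-g'(\eta)/\eta$ extends to a real-analytic, even function near $0$. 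In particular $G$ is $C^1$ there; its value is $G(0)=-g''(0)=-16/15$ (one reads $g''(0)=16/15$ from the Maclaurin series $g(\eta)=-\tfrac23+\tfrac{8}{15}\eta^{2}+O(\eta^{4})$, consistent with (\ref{eq:fncg})); and $G'(0)=0$ because an even $C^1$ function has vanishing derivative at its centre of symmetry. (Equivalently, inserting the expansions of $\tanh\eta$ and $1/\cosh^2\eta$ into the bracket defining $G$ gives $3g(\eta)+2\eta^{-1}\tanh\eta\,\cosh^{-2}\eta=-\tfrac{16}{15}\eta^{2}+O(\eta^{4})$, so $G(\eta)=-\tfrac{16}{15}+O(\eta^{2})$.)

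Finally, for $\eta\to\infty$ the bounds $|\tanh\eta|\le1$ and $0<\cosh^{-2}\eta\le1$ give $|g(\eta)|\le\eta^{-2}+\eta^{-3}\le2\,\eta^{-2}$, whence from $G(\eta)=3\eta^{-2}g(\eta)+2\eta^{-3}\tanh\eta\,\cosh^{-2}\eta$ we get $|G(\eta)|\le C\eta^{-3}$ for $\eta\ge1$; differentiating this closed form and using that $\tanh\eta\,\cosh^{-2}\eta$ and its derivative are bounded, together with the estimate $|g'(\eta)|\le C\eta^{-2}$ read off the displayed formula for $g'$, yields $|G'(\eta)|\le C\eta^{-3}$ for $\eta\ge1$. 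Hence $G(\eta)\to0$ and $G'(\eta)\to0$ as $\eta\to\infty$. The only delicate point is the passage through $\eta=0$: promoting mere continuity of $G$ to $C^1$-regularity with $G'(0)=0$ is the familiar "division of a flat function" issue, and routing it through the analyticity and oddness of $q$ (equivalently, carrying the Taylor expansions far enough) settles it; everything else is routine bookkeeping.
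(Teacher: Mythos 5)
Your proof is correct, and it fills in exactly the ``straightforward calculation'' that the paper leaves entirely to the reader (the paper offers no proof of this lemma beyond that phrase). The verification of $g'(\eta)=-\eta\,G(\eta)$ and of the decay at infinity checks out, and your handling of the only non-routine point --- regularity of $G$ at $\eta=0$ via the odd analytic function $q(\eta)=\eta^{3}g(\eta)$ vanishing to exactly third order, which makes $g$ and hence $G$ even and analytic near the origin --- is a clean way to obtain $G(0)=-16/15$ and $G'(0)=0$.
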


\begin{lemma}\label{lm:FTTexistence} The values of the partial derivatives \  
$\displaystyle{ \frac{\,\partial^2 F\,}{\,\partial T^2\,} }$,\quad
$\displaystyle{ \frac{\partial}{\,\partial Y\,}\left( 
\frac{\,\partial F\,}{\,\partial T\,} \right) }$, \\
$\displaystyle{ \frac{\partial}{\,\partial T\,}\left( 
\frac{\,\partial F\,}{\,\partial Y\,} \right) }$ \  and \   
$\displaystyle{ \frac{\,\partial^2 F\,}{\,\partial Y^2\,} }$
exist at each point in $W_1$. Furthermore,
\[
\frac{\partial}{\,\partial Y\,}\left( \frac{\,\partial F\,}{\,\partial T\,}\right)=\frac{\partial}{\,\partial T\,}\left( 
\frac{\,\partial F\,}{\,\partial Y\,} \right) \qquad \mbox{on} \quad W_1\,.
\]
\end{lemma}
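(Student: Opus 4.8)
The plan is to differentiate the integral representations \eqref{eq:FT} and \eqref{eq:FY} of $\partial F/\partial T$ and $\partial F/\partial Y$ once more under the integral sign, in the same manner as in the proof of Lemma \ref{lm:FTFYW1}, and then to obtain the equality of the two mixed partials from the explicit formulas together with the identities in Lemma \ref{lm:cgproperty}.

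First I would record the chain-rule data in a convenient form. With $s=\sqrt{\,\xi^2+Y\,}$ and $\eta=s/(2k_BT)$ one has the identity $h(T,\,Y,\,\xi)=\frac{1}{\,2k_BT\,}\cdot\frac{\tanh\eta}{\eta}$, so that $h$ depends on $(T,\,Y,\,\xi)$ only through $T$ and $\eta$; moreover $\partial\eta/\partial T=-\eta/T$ and $\partial\eta/\partial Y=1/\bigl(2(2k_BT)^2\eta\bigr)$. Differentiating the integrands of \eqref{eq:FT} and \eqref{eq:FY} once more and using $\frac{d}{d\eta}\bigl(\tanh\eta/\eta\bigr)=\eta\,g(\eta)$ and $g'(\eta)=-\eta\,G(\eta)$ (Lemma \ref{lm:cgproperty}), every factor $1/\eta$ arising from $\partial\eta/\partial Y$ is cancelled against a factor $\eta$, and the four twice-differentiated integrands reduce, up to powers of $2k_BT$, to universal functions of $\eta$ built from $1/\cosh^2\eta$, $(\eta\tanh\eta)/\cosh^2\eta$, $\tanh\eta/(\eta\cosh^2\eta)$, $g(\eta)$ and $G(\eta)$. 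Concretely, $\partial^2h/\partial T^2=(1-\eta\tanh\eta)/(k_BT^3\cosh^2\eta)$, $\partial^2h/\partial Y^2=-G(\eta)/\bigl(4(2k_BT)^5\bigr)$, and both $\partial^2h/(\partial Y\,\partial T)$ and $\partial^2h/(\partial T\,\partial Y)$ equal $\tanh\eta/\bigl(8k_B^3T^4\,\eta\cosh^2\eta\bigr)$.

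Next, near a fixed point $(T_0,\,Y_0)\in W_1$ I would restrict to a neighborhood contained in $W_1$ on which $T$ stays in a compact subinterval of $(0,\,T_c)$, so that the powers of $1/T$ are bounded there. Since $\eta\ge 0$ and the functions $1/\cosh^2\eta$, $(\eta\tanh\eta)/\cosh^2\eta$, $\tanh\eta/(\eta\cosh^2\eta)$, $g$ and $G$ are all bounded on $[0,\,\infty)$ — by elementary estimates together with Lemmas \ref{gproperty} and \ref{lm:cgproperty} — each of the four twice-differentiated integrands is dominated on this neighborhood by a constant, which is of course integrable on $[0,\,\hslash\omega_D]$. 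The standard theorem on differentiation under the integral sign then gives the existence of $\partial^2F/\partial T^2$, $\partial_Y(\partial F/\partial T)$, $\partial_T(\partial F/\partial Y)$ and $\partial^2F/\partial Y^2$ at $(T_0,\,Y_0)$, hence at each point of $W_1$, together with the corresponding integral formulas. Finally, for the equality of the mixed partials: for each fixed $\xi\in[0,\,\hslash\omega_D]$ the map $(T,\,Y)\mapsto h(T,\,Y,\,\xi)$ is $C^\infty$ on $W_1$ (there $\xi^2+Y>0$ and $T>0$), so Schwarz's theorem gives $\partial^2h/(\partial Y\,\partial T)=\partial^2h/(\partial T\,\partial Y)$ pointwise; integrating this identity over $[0,\,\hslash\omega_D]$, which is legitimate because both orders of differentiation under the integral have just been justified, yields $\partial_Y(\partial F/\partial T)=\partial_T(\partial F/\partial Y)$ on $W_1$. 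Equivalently, one may simply note that the two explicit formulas found above coincide, the underlying identity being $\eta^2G(\eta)-3g(\eta)=2\tanh\eta/(\eta\cosh^2\eta)$, which is immediate from \eqref{eq:fncg}.

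The step I expect to be the main obstacle is purely computational: carrying out the second differentiation of the integrands cleanly and verifying that the apparent singularities at $\eta=0$ — that is, at $\xi=0$ when $Y$ is small — which enter through the factor $\partial\eta/\partial Y=1/\bigl(2(2k_BT)^2\eta\bigr)$, genuinely cancel against the factors $\eta$ produced by $\frac{d}{d\eta}\bigl(\tanh\eta/\eta\bigr)=\eta\,g(\eta)$ and $g'(\eta)=-\eta\,G(\eta)$, so that one lands on the bounded functions $g$ and $G$ rather than on something singular. Once the algebra is organized around the representation $h=\frac{1}{\,2k_BT\,}\cdot\frac{\tanh\eta}{\eta}$, the domination estimates and the interchange of the order of differentiation are routine.
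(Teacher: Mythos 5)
Your proposal is correct and follows essentially the same route as the paper: differentiate the integral representations of $\partial F/\partial T$ and $\partial F/\partial Y$ once more under the integral sign, dominate the resulting integrands (which reduce to bounded functions of $\eta$ involving $1/\cosh^2\eta$, $\eta\tanh\eta/\cosh^2\eta$, $\tanh\eta/(\eta\cosh^2\eta)$ and $G(\eta)$) by constants on a neighborhood where $T$ is bounded away from $0$, and read off the equality of the mixed partials from the coinciding explicit formulas. Your computed expressions for the four second-order integrands agree with those in the paper, which only details the $\partial^2F/\partial T^2$ case and treats the others ``similarly,'' so your write-up is if anything more complete.
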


\begin{proof}
Let $(T,\,Y)\in W_1$. Then there is a $\theta$ \  \  $(0<\theta<1)$ satisfying $\theta T_c<T<T_c$. Set $\displaystyle{ \eta=\frac{\,\sqrt{\,\xi^2+Y\,}\,}{2k_BT} }$. Then
\[
\left| \frac{\partial}{\,\partial T\,}\left( \frac{1}{\,\cosh^2\eta\,} \right)
\right| \leq \frac{\,\sqrt{\hslash^2\omega_D^2+2\,\Delta_0^2}\,}{k_B\theta^2T_c^2}\,,
\]
where the right side is integrable on $[0,\,\hslash\omega_D]$. So, the function: $(T,\,Y)\mapsto I_1(T,\,Y)$ (see (\ref{eq:I1I2})), and hence $(\partial F/\partial T)$ (see (\ref{eq:FT})) is differentiable with respect to $T$ on $W_1$, and the second-order partial derivative is given by
\[
\frac{\,\partial^2 F\,}{\,\partial T^2\,}(T,\,Y)
=\frac{1}{\,k_BT^3\,} \left\{ I_1(T,\,Y)-\int_0^{\hslash\omega_D}
 \frac{\eta\,\tanh\eta}{\,\cosh^2\eta \,}\,d\xi \right\},\qquad
 \eta=\frac{\,\sqrt{\,\xi^2+Y\,}\,}{2k_BT}\,.
\]

Similarly we can show that $(\partial F/\partial T)$ is differentiable with respect to $Y$ on $W_1$, that $(\partial F/\partial Y)$ is differentiable with respect to $T$ on $W_1$, and that $(\partial F/\partial Y)$ is differentiable with respect to $Y$ on $W_1$. The corresponding second-order partial derivatives are given as follows:
\[
\frac{\partial}{\,\partial Y\,}\left( 
\frac{\,\partial F\,}{\,\partial T\,} \right)(T,\,Y)
=\frac{\partial}{\,\partial T\,}\left( 
\frac{\,\partial F\,}{\,\partial Y\,} \right)(T,\,Y)
=\frac{1}{\,(2k_BT)^3T\,} \int_0^{\hslash\omega_D}
 \frac{\tanh\eta}{\,\eta\,\cosh^2\eta \,}\,d\xi,
\]
\[
\frac{\,\partial^2 F\,}{\,\partial Y^2\,}(T,\,Y)
=-\,\frac{1}{\,4\,(2k_BT)^5\,} \int_0^{\hslash\omega_D} G(\eta)\,d\xi\,,
\qquad \eta=\frac{\,\sqrt{\,\xi^2+Y\,}\,}{2k_BT}\,.
\]
Here, $G$ is that in Lemma \ref{lm:cgproperty} (see also (\ref{eq:fncg})).
\end{proof}

\begin{lemma}\label{lm:FC2W1}
The partial derivatives
$\displaystyle{ \frac{\,\partial^2 F\,}{\,\partial T^2\,} }$,
$\displaystyle{ \frac{\partial}{\,\partial Y\,}\left(
 \frac{\,\partial F\,}{\,\partial T\,} \right) }$ and
$\displaystyle{\frac{\,\partial^2 F\,}{\,\partial Y^2\,}}$
are continuous on $W_1$. Consequently, $F$ is of class $C^2$ on $W_1$.
\end{lemma}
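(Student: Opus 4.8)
The plan is to proceed exactly as in the proof of Lemma~\ref{lm:FC1onW1}, now using the three explicit integral representations of the second-order partial derivatives supplied by Lemma~\ref{lm:FTTexistence}. Each of $\partial^2 F/\partial T^2$, $(\partial/\partial Y)(\partial F/\partial T)$ and $\partial^2 F/\partial Y^2$ has the shape (a prefactor depending only on $T$) times (an integral over $[0,\hslash\omega_D]$ of $\psi(\eta)\,d\xi$, with $\eta=\sqrt{\xi^2+Y}/(2k_BT)$), where $\psi$ runs through the functions $\eta\mapsto 1/\cosh^2\eta$, $\eta\mapsto \eta\tanh\eta/\cosh^2\eta$, $\eta\mapsto \tanh\eta/(\eta\cosh^2\eta)$ and $\eta\mapsto G(\eta)$. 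Since the prefactors $1/(k_BT^3)$, $1/((2k_BT)^3T)$ and $-1/(4(2k_BT)^5)$ are continuous in $T$, it suffices to fix an arbitrary $(T_0,Y_0)\in W_1$ and prove that each of these integrals, viewed as a function of $(T,Y)$, is continuous at $(T_0,Y_0)$.

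First I would choose a small closed neighborhood $N\subset W_1$ of $(T_0,Y_0)$. Because $W_1$ is the open rectangle $(0,T_c)\times(0,2\Delta_0^2)$, on $N$ the variable $T$ stays in a compact subinterval of $(0,T_c)$ and $Y$ in a compact subinterval of $(0,2\Delta_0^2)$; hence, for $\xi\in[0,\hslash\omega_D]$, the quantity $\eta=\sqrt{\xi^2+Y}/(2k_BT)$ ranges over a fixed compact subinterval $[a,b]\subset(0,\infty)$. On $N$ the partial derivatives $\partial\eta/\partial T=-\sqrt{\xi^2+Y}/(2k_BT^2)$ and $\partial\eta/\partial Y=1/(4k_BT\sqrt{\xi^2+Y})$ are bounded uniformly in $\xi\in[0,\hslash\omega_D]$, so $|\eta(T,Y)-\eta(T_0,Y_0)|\le L\,(|T-T_0|+|Y-Y_0|)$ for a constant $L$ depending only on $N$ and $\hslash\omega_D$.

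Next I would note that the four functions $\psi$ above are of class $C^1$ on $[0,\infty)$: the first two manifestly, the third because $\tanh\eta/\eta$ extends continuously (indeed smoothly) to $\eta=0$, and the fourth by Lemma~\ref{lm:cgproperty}. In particular each $\psi$ is Lipschitz on the compact set $[a,b]$ with some constant $\Lambda$. Combining this with the preceding estimate, $|\psi(\eta(T,Y))-\psi(\eta(T_0,Y_0))|\le \Lambda L\,(|T-T_0|+|Y-Y_0|)$ uniformly in $\xi\in[0,\hslash\omega_D]$; integrating over $[0,\hslash\omega_D]$ shows each of the three integrals is Lipschitz, hence continuous, at $(T_0,Y_0)$. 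As $(T_0,Y_0)\in W_1$ was arbitrary, the three second-order partial derivatives are continuous on $W_1$. Together with the existence of all second-order partials and the equality of the mixed partials, both established in Lemma~\ref{lm:FTTexistence}, and the fact that $W_1$ is open, the standard criterion yields $F\in C^2(W_1)$.

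The estimates here are entirely routine once this framework is set up; the only point needing a little care — and the reason the argument is genuinely simpler on $W_1$ than near the boundary pieces $W_2,W_3,W_4$ — is the observation that near an interior point of $W_1$ the variable $\eta$ never approaches $0$ or $\infty$, so the delicate endpoint behavior of $g$ and $G$ recorded in Lemmas~\ref{gproperty} and \ref{lm:cgproperty} is not needed, and every bound above reduces to a one-line mean-value estimate over the compact range $[a,b]$.
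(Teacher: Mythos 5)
Your proposal is correct and follows essentially the same route as the paper: the paper likewise reduces continuity to a mean-value estimate on the integrand (explicitly for $\partial^2 F/\partial Y^2$ via the bound $\hslash\omega_D\,\max_{\eta\geq 0}|G'(\eta)|$ times a Lipschitz estimate of $\eta$ in $(T,Y)$, with the remaining derivatives treated as ``similar''). The only cosmetic difference is that you localize $\eta$ to a compact interval $[a,b]\subset(0,\infty)$ to obtain the Lipschitz constant, whereas the paper uses the global bound on $|G'|$ furnished by Lemma~\ref{lm:cgproperty}.
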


\begin{proof}
We show that $(\partial^2 F/\partial Y^2)$ is continuous on $W_1$. Similarly we can show the continuity of other second-order partial derivatives.

By the form of $(\partial^2 F/\partial Y^2)$ given in the proof of Lemma \ref{lm:FTTexistence}, it suffices to show that the function: $(T,\,Y)\mapsto I_3(T,\,Y)$ is continuous at $(T_0\,,\,Y_0)\in W_1$. Here,
\[
I_3(T,\,Y)=\int_0^{\hslash\omega_D} G(\eta)\,d\xi\,,\qquad
\eta=\frac{\,\sqrt{\,\xi^2+Y\,}\,}{2k_BT}\,.
\]
Since $(T,\,Y)$ is close to $(T_0\,,\,Y_0)$, it then follows that $T_0/2<T$.
A straightforward calculation then gives
\begin{eqnarray}\nonumber
& &\left| \,I_3(T,\,Y)-I_3(T_0\,,\,Y_0)\,\right|\\ \nonumber
&\leq& \hslash\omega_D\,\displaystyle{\max_{\eta\geq 0}|G'(\eta)|}
\left( \frac{\,\sqrt{\,\hslash^2\omega_D^2+2\,\Delta_0^2\,}\,}{k_BT_0^2}|T-T_0|+\frac{|Y-Y_0|}{\,k_BT_0\sqrt{Y_0}\,}\right).
\end{eqnarray}
Hence the function: $(T,\,Y)\mapsto I_3(T,\,Y)$ is continuous at $(T_0\,,\,Y_0)\in W_1$.
\end{proof}

\bigskip

%%%%%%%%%%%%%%%%%%%%%%%%%%%%%%%%%%%%%%%%%%%%%%%%%%%%% 5
\noindent \textbf{V. PROOFS OF OUR MAIN RESULTS}

\medskip

In this section we prove Theorem \ref{thm:solution}, Propositions
\ref{prp:behavior} and \ref{prp:behaviorprime} in a sequence of lemmas.

\begin{remark} One may prove the theorem and the propositions above on the basis of the implicit function theorem. In this case, \textit{an interior point} $(T_0,\,Y_0)$ of the domain $W$ satisfying $\displaystyle{ F(T_0,\,Y_0)=0 }$ need to exist. But there are the two points $(0,\,\Delta_0^2)$ and $(T_c\,,\,0)$ in \textit{the boundary} of $W$ satisfying
\begin{equation}\label{eq:F=0}
F(0,\,\Delta_0^2)=F(T_c\,,\,0)=0.
\end{equation}
So one can not apply the implicit function theorem in its present form.
\end{remark}

\begin{lemma}\label{lm:existence}
There is a unique solution: $T \mapsto Y=f(T)$ to the gap equation $\displaystyle{F(T,\,Y)=0}$ such that the function $f$ is continuous on the closed interval
$[0,\,T_c]$ and satisfies $f(0)=\Delta_0^2$ and $f(T_c)=0$.
\end{lemma}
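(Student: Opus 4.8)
The plan is to build $f$ pointwise, taking $f(T)$ to be the zero in $Y$ of the one-variable function $Y\mapsto F(T,Y)$, and then to establish continuity separately.

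First I would record the monotonicity that does all the work. Combining Lemma~\ref{lm:FTFYminus} with the formula \eqref{eq:FYonW} for $(\partial F/\partial Y)(0,Y_0)$, one obtains $(\partial F/\partial Y)(T,Y)<0$ at every point of $W$; hence, for each fixed $T\in[0,T_c]$, the map $Y\mapsto F(T,Y)$ is continuous and strictly decreasing, and so has at most one zero. For $T=0$ and $T=T_c$ the zeros are given by \eqref{eq:F=0}, namely $F(0,\Delta_0^2)=0$ and $F(T_c,0)=0$, so we put $f(0)=\Delta_0^2$ and $f(T_c)=0$. For $0<T<T_c$ I would check the signs at the two ends of $Y\in[0,\tfrac{3}{2}\Delta_0^2]$. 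Since $(T,0)\in W_3\subset W\setminus W_2$, Lemma~\ref{lm:FTFYminus} gives $(\partial F/\partial T)(T,0)<0$, so $T\mapsto F(T,0)$ is strictly decreasing on $(0,T_c]$ and hence $F(T,0)>F(T_c,0)=0$. On the other hand, strict monotonicity of $Y\mapsto F(0,Y)$ together with $F(0,\Delta_0^2)=0$ gives $F(0,\tfrac{3}{2}\Delta_0^2)<0$, and since $\tanh<1$ we get $F(T,\tfrac{3}{2}\Delta_0^2)\le F(0,\tfrac{3}{2}\Delta_0^2)<0$. The intermediate value theorem then produces a zero $f(T)\in(0,\tfrac{3}{2}\Delta_0^2)$, unique by monotonicity. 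In particular $0\le f(T)\le\tfrac{3}{2}\Delta_0^2<2\Delta_0^2$ for every $T\in[0,T_c]$, so $(T,f(T))\in W$ throughout; since $Y\mapsto F(T,Y)$ is injective, the equation $F(T,Y)=0$ determines $f$ uniquely, which also yields the uniqueness assertion.

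The substantive point is continuity of $f$ on $[0,T_c]$, which I would prove by a subsequence argument. Fix $T_*\in[0,T_c]$ and a sequence $T_n\to T_*$; since $\{f(T_n)\}$ lies in the compact set $[0,\tfrac{3}{2}\Delta_0^2]$, it suffices to show that every convergent subsequence $f(T_{n_k})\to Y_\infty$ satisfies $Y_\infty=f(T_*)$. If $Y_\infty>0$, then $(T_*,Y_\infty)$ belongs to $W_2$, $W_1$, or $W_4$ according as $T_*=0$, $0<T_*<T_c$, or $T_*=T_c$; by continuity of $F$ on $W$ (Lemma~\ref{lm:FC1onW}) we get $F(T_*,Y_\infty)=\lim_k F(T_{n_k},f(T_{n_k}))=0$, so $Y_\infty=f(T_*)$ by uniqueness of the zero --- except that when $T_*=T_c$ one has $F(T_c,Y_\infty)<F(T_c,0)=0$ by strict monotonicity, so $Y_\infty$ cannot be positive and $Y_\infty=0=f(T_c)$. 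It remains to exclude $Y_\infty=0$ when $T_*<T_c$. For $0<T_*<T_c$ this is immediate: $(T_*,0)\in W_3$, so continuity of $F$ would force $F(T_*,0)=0$, contradicting $F(T_*,0)>0$ from the previous step. For $T_*=0$ I would invoke Fatou's lemma: along such a subsequence $h(T_{n_k},f(T_{n_k}),\xi)\to 1/\xi$ for each $\xi\in(0,\hslash\omega_D]$, whence $\liminf_k\int_0^{\hslash\omega_D}h(T_{n_k},f(T_{n_k}),\xi)\,d\xi\ge\int_0^{\hslash\omega_D}\frac{d\xi}{\xi}=+\infty$, contradicting $F(T_{n_k},f(T_{n_k}))=0$. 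Hence $Y_\infty=f(T_*)$ in all cases, and $f$ is continuous.

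I expect the continuity argument --- and within it the case $T_*=0$ --- to be the main obstacle: the ``limit integrand'' $1/\xi$ is not integrable on $[0,\hslash\omega_D]$, and it is precisely this divergence, used through Fatou's lemma, that prevents $f(T)$ from approaching $0$ as $T\downarrow 0$. The remaining ingredients are bookkeeping about which of $W_1,\dots,W_4$ a boundary point lies in, together with the two sign facts $F(T,0)>0$ on $(0,T_c)$ and $F(T_c,Y)<0$ for $Y>0$, both of which follow at once from the monotonicity established in Section~III.
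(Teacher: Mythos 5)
Your proposal is correct, and its skeleton is the same as the paper's: for each fixed $T$ you locate $f(T)$ by the intermediate value theorem between $Y=0$ (where $F>0$) and a larger $Y$ (where $F<0$), get uniqueness from strict monotonicity of $Y\mapsto F(T,Y)$, and then deduce continuity of $f$ from continuity of $F$ on $W$. The differences are in execution. You use the fixed ceiling $\tfrac{3}{2}\Delta_0^2$ and the elementary bound $\tanh<1$ to get $F(T,\tfrac{3}{2}\Delta_0^2)<0$ directly from $F(0,\tfrac{3}{2}\Delta_0^2)<0$, whereas the paper propagates the sign from $F(T_c,Y_1)<0$ backwards in $T$ using an arbitrary $Y_1<2\Delta_0^2$; both work. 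More substantively, the paper disposes of the continuity of $f$ in one sentence by citing continuity of $F$ and Lemma \ref{lm:FTFYminus}, while you make it explicit with a compactness/subsequence argument, and at the one genuinely delicate point --- ruling out $f(T_{n_k})\to 0$ as $T_{n_k}\downarrow 0$, where the limit point $(0,0)$ lies outside $W$ so continuity of $F$ on $W$ says nothing --- you supply a Fatou's lemma argument exploiting the divergence of $\int_0^{\hslash\omega_D}\xi^{-1}\,d\xi$. That step is not present in the paper and is a genuine improvement in rigor; the rest is the paper's argument in slightly different clothing.
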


\begin{proof}
By Lemmas \ref{lm:FTFYminus}, \ref{lm:FC1onW} and (\ref{eq:F=0}), the function: $Y \mapsto F(T_c\,,\,Y)$ is monotonically decreasing and there is a $Y_1$ \quad $(0<Y_1<2\Delta_0^2)$ satisfying $F(T_c\,,\,Y_1)<0$. Note that $Y_1$ is arbitrary as long as $0<Y_1<2\Delta_0^2$. Hence, by Lemma \ref{lm:FC1onW}, there is a $T_1$ \quad $(0<T_1<T_c)$ satisfying $F(T_1\,,\,Y_1)<0$. Hence, $F(T,\,Y_1)<0$ for $T_1\leq T\leq T_c$. On the other hand, by Lemmas \ref{lm:FTFYminus}, \ref{lm:FC1onW} and (\ref{eq:F=0}), the function: $T \mapsto F(T,\,0)$ is monotonically decreasing and there is a $T_2$ \quad $(0<T_2<T_c)$ satisfying $F(T_2\,,\,0)>0$. Note that $T_2$ is arbitrary as long as $0<T_2<T_c$. Hence, $F(T,\,0)>0$ for $T_2\leq T<T_c$.

Let $\max(T_1\,,\,T_2)\leq T<T_c$ and fix $T$. It then follows from Lemmas \ref{lm:FTFYminus} and \ref{lm:FC1onW} that the function: $Y \mapsto F(T,\,Y)$ with $T$ fixed is monotonically decreasing on $[0,\,Y_1]$. Since $F(T,\,0)>0$ and $F(T,\,Y_1)<0$, there is a unique $Y$\   $(0<Y<Y_1)$ satisfying $F(T,\,Y)=0$. When $T=T_c$, there is a unique value $Y=0$ satisfying $F(T_c\,,\,Y)=0$ \  (see (\ref{eq:F=0})).

Since $F$ is continuous on $W$ by Lemma \ref{lm:FC1onW}, there is a unique solution: $T \mapsto Y=f(T)$ to the gap equation $\displaystyle{F(T,\,Y)=0}$ such that $f$ is continuous on $[\max(T_1\,,\,T_2),\,T_c]$ and $f(T_c)=0$.

Since $(\partial F/\partial Y)(0,\,Y)<0$ \  $(0<Y<2\Delta_0^2)$ by (\ref{eq:FYonW}), there is a unique value $Y=\Delta_0^2$ satisfying $F(0,\,Y)=0$.
Combining Lemma \ref{lm:FC1onW} with Lemma \ref{lm:FTFYminus} therefore implies that the function $f$ is continuous on $[0,\,T_c]$ and that $f(0)=\Delta_0^2$ and $f(T_c)=0$.
\end{proof}

\begin{lemma}\label{lm:derivative}
The function $f$ given by Lemma \ref{lm:existence} is of class $C^1$ on $[0,\,T_c]$, and the derivative $f'$ satisfies
\[
f'(0)=0,\qquad f'(T_c)=8\, k_B^2T_c\,\frac{\,\displaystyle{
 \int_0^{\displaystyle{\hslash\omega_D/(2k_BT_c)}} \frac{d\xi}{\,\cosh^2\xi\,} }\,}
{\,\displaystyle{ 
\int_0^{\displaystyle{\hslash\omega_D/(2k_BT_c)}} g(\xi)\,d\xi }\,}\,.
\]
\end{lemma}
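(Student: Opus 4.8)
The plan is to establish $f\in C^1$ first on the \emph{open} interval $(0,T_c)$ by the implicit function theorem, and then to recover the one-sided derivatives at the two endpoints by passing to the limit in the resulting formula for $f'$. The delicate point is that the two points $(0,\Delta_0^2)$ and $(T_c,0)$ of \eqref{eq:F=0}, through which the solution curve must pass, lie on the boundary of $W$, so the implicit function theorem does not apply there; what rescues the argument is that $\partial F/\partial T$ and $\partial F/\partial Y$ are continuous up to $\partial W$ by Lemma \ref{lm:FC1onW}, together with the vanishing of $\partial F/\partial T$ on $W_2$.

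\emph{Step 1 (interior).} By Lemma \ref{lm:existence} and its proof one has $0<f(T)<2\,\Delta_0^2$ for $0<T<T_c$, so $(T,f(T))$ lies in the open set $W_1$. On $W_1$ the function $F$ is of class $C^1$ (Lemma \ref{lm:FC1onW1}) and $\partial F/\partial Y<0$ (Lemma \ref{lm:FTFYminus}), so the implicit function theorem applies at each such point; by the local uniqueness in that theorem the locally defined $C^1$ function coincides with the continuous function $f$, whence $f$ is of class $C^1$ on $(0,T_c)$ and
\[
f'(T) = -\,\frac{(\partial F/\partial T)(T,f(T))}{(\partial F/\partial Y)(T,f(T))},\qquad 0<T<T_c.
\]
(In passing, $f'<0$ on $(0,T_c)$ because both partial derivatives are negative there, though this is not needed here.)

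\emph{Step 2 (the endpoint values).} Since $f$ is continuous on $[0,T_c]$ with $f(0)=\Delta_0^2$ and $f(T_c)=0$, and since $\partial F/\partial T$ and $\partial F/\partial Y$ are continuous on all of $W$ (Lemma \ref{lm:FC1onW}), we have $(T,f(T))\to(0,\Delta_0^2)\in W_2$ as $T\downarrow0$ and $(T,f(T))\to(T_c,0)\in W_3$ as $T\uparrow T_c$, so the right-hand side above converges in each case. For the left endpoint, the proof of Lemma \ref{lm:FTFYW} gives $(\partial F/\partial T)(0,\Delta_0^2)=0$, while \eqref{eq:FYonW} gives $(\partial F/\partial Y)(0,\Delta_0^2)=-\hslash\omega_D/(2\,\Delta_0^2\sqrt{\hslash^2\omega_D^2+\Delta_0^2})\neq0$; hence $\lim_{T\downarrow0}f'(T)=0$. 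For the right endpoint I substitute the boundary values $(\partial F/\partial T)(T_c,0)$ and $(\partial F/\partial Y)(T_c,0)$ computed in the proof of Lemma \ref{lm:FTFYW}, noting that $(\partial F/\partial Y)(T_c,0)\neq0$ because $g<0$ (Lemma \ref{gproperty}); after the change of variable $\xi\mapsto 2k_BT_c\,\xi$ the constant prefactors collapse to $8\,k_B^2T_c$ and the quotient of integrals becomes exactly the one in the statement, so $\lim_{T\uparrow T_c}f'(T)$ equals the claimed value of $f'(T_c)$.

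\emph{Step 3 (conclusion).} Since $f$ is continuous on $[0,T_c]$, differentiable on $(0,T_c)$, and $f'$ has the finite one-sided limits found in Step 2, the standard mean value theorem argument shows that $f$ is one-sidedly differentiable at $0$ and at $T_c$ with $f'(0)$ and $f'(T_c)$ equal to those limits, and that $f'$ extends continuously to $[0,T_c]$; thus $f\in C^1([0,T_c])$ with the stated values. The only step that is more than bookkeeping is Step 2, where one must pass the formula $f'=-(\partial F/\partial T)/(\partial F/\partial Y)$ to the boundary of $W$; the continuity of the partials up to $\partial W$ makes this legitimate, and the fact that $\partial F/\partial T$ vanishes on $W_2$ is precisely what yields $f'(0)=0$.
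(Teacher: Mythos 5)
Your proof is correct and follows the same route the paper indicates: the identity $f'=-F_T/F_Y$ coming from the implicit function theorem on $W_1$, combined with the continuity of the partial derivatives up to the boundary (Lemma \ref{lm:FC1onW}), the nonvanishing of $F_Y$ at $(0,\Delta_0^2)$ and $(T_c,0)$, and the vanishing of $F_T$ on $W_2$. Your Step 3 (the mean-value-theorem argument upgrading the limits of $f'$ to genuine one-sided derivatives at $T=0$ and $T=T_c$) merely supplies a detail that the paper's one-line proof leaves implicit.
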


\begin{proof}
Lemmas \ref{lm:FTFYW1}, \ref{lm:FTFYW}, \ref{lm:FTFYminus} and \ref{lm:FC1onW} immediately imply that the function $f$ is of class $C^1$ on the interval $[0,\,T_c]$ and that its derivative is given by
\begin{equation}\label{eq:f'(T)}
f'(T)=-\frac{\,F_T(T,\,f(T))\,}{\,F_Y(T,\,f(T))\,}\,.
\end{equation}
The values of $f'(0)$ and $f'(T_c)$ are derived from (\ref{eq:f'(T)}).
\end{proof}

Combining (\ref{eq:f'(T)}) with Lemma \ref{lm:FTFYminus} immediately yields the following.
\begin{lemma}
The function $f$ given by Lemma \ref{lm:existence} is monotonically decreasing on $[0,\,T_c]$:
\[
f(0)=\Delta_0^2>f(T_1)>f(T_2)>f(T_c)=0, \qquad 0<T_1<T_2<T_c\,.
\]
\end{lemma}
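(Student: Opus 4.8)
The plan is to read off the sign of $f'$ from the quotient formula \eqref{eq:f'(T)} and then integrate. First I would recall from Lemma~\ref{lm:derivative} that $f$ is of class $C^1$ on $[0,T_c]$ with
\[
f'(T)=-\,\frac{F_T(T,f(T))}{F_Y(T,f(T))}\,,\qquad 0\le T\le T_c,
\]
so that everything reduces to controlling the signs of $F_T$ and $F_Y$ along the graph of $f$.

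Next I would check that for every interior point $T\in(0,T_c)$ the point $(T,f(T))$ lies in $W_1$, hence in $W\setminus W_2$. That $f(T)<2\Delta_0^2$ and $f(T)>0$ is essentially contained in the construction of Lemma~\ref{lm:existence}, where the solution value satisfies $0<f(T)<Y_1<2\Delta_0^2$; one can also argue directly that $f(T)>0$ for $T<T_c$ because $T\mapsto F(T,0)$ is strictly decreasing with $F(T_c,0)=0$, so $F(T,0)>0=F(T,f(T))$ together with the monotonicity of $Y\mapsto F(T,Y)$ forces $f(T)>0$. With $(T,f(T))\in W\setminus W_2$, Lemma~\ref{lm:FTFYminus} yields $F_T(T,f(T))<0$ and $F_Y(T,f(T))<0$, whence $f'(T)=-F_T/F_Y<0$ for all $T\in(0,T_c)$.

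Finally, for any $0\le T_1<T_2\le T_c$ the fundamental theorem of calculus gives $f(T_2)-f(T_1)=\int_{T_1}^{T_2}f'(t)\,dt<0$, since $f'$ is strictly negative on the open interval $(T_1,T_2)$. Hence $f$ is strictly decreasing on $[0,T_c]$, and combining this with the endpoint values $f(0)=\Delta_0^2$ and $f(T_c)=0$ supplied by Lemma~\ref{lm:existence} gives the displayed chain $f(0)=\Delta_0^2>f(T_1)>f(T_2)>f(T_c)=0$.

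I do not anticipate a genuine obstacle: once \eqref{eq:f'(T)} and Lemma~\ref{lm:FTFYminus} are available, the statement is an immediate corollary. The only step requiring a line of care is the membership $(T,f(T))\in W_1$ for interior $T$, because Lemma~\ref{lm:FTFYminus} deliberately excludes $W_2$, where $F_T$ in fact vanishes; this is exactly why $f'(0)=0$ rather than $f'(0)<0$, but it does not interfere with strict monotonicity on the closed interval.
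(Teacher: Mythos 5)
Your proposal is correct and follows the same route as the paper, which simply combines the formula $f'(T)=-F_T/F_Y$ from (\ref{eq:f'(T)}) with the sign information of Lemma \ref{lm:FTFYminus} to conclude $f'<0$; you merely spell out the (worthwhile) detail that $(T,f(T))\in W_1\subset W\setminus W_2$ for interior $T$ before integrating.
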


\begin{lemma}
Let $f$ be as in Lemma \ref{lm:existence} and $I\left[ \cdot \right]$ as in (\ref{eq:integralI}). Then the function $f$ is of class $C^2$ on $[0,\,T_c]$, and the second derivative $f''$ satisfies \quad $\displaystyle{ f''(0)=0 }$
\quad and
\begin{eqnarray}\nonumber
& & f''(T_c) \\ \nonumber
&=& 16\, k_B^2\,\frac{\,\displaystyle{
 I\left[ \,\frac{\,\eta_0\tanh \eta_0-1\,}{\cosh^2 \eta_0}\, \right]
}\,}{\, \displaystyle{
 I\left[ \,g(\eta_0)\, \right]  }\,}
-32\, k_B^2\,\frac{\,\displaystyle{
 I\left[ \,\frac{1}{\,\cosh^2 \eta_0\,}\, \right]
 I\left[ \,\frac{\,\tanh \eta_0\,}{\,\eta_0\cosh^2 \eta_0\,}\, \right]
 }\,}{\, \displaystyle{
 \left\{ \, I\left[ \,g(\eta_0)\, \right] \, \right\}^2  }\,} \\ \nonumber
& &\quad +8\, k_B^2\,\frac{\,\displaystyle{
 \left\{ I\left[ \,\frac{1}{\,\cosh^2 \eta_0\,}\, \right] \right\}^2
 I\left[ \,G(\eta_0)\, \right]
 }\,}{\, \displaystyle{
 \left\{ \, I\left[ \,g(\eta_0)\, \right] \, \right\}^3 }\,}\,,
\qquad \eta_0=\frac{\xi}{\,2k_BT_c\,}\,.
\end{eqnarray}
\end{lemma}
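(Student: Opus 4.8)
The plan is to differentiate once more the identity $f'(T)=-F_T(T,f(T))/F_Y(T,f(T))$ from Lemma~\ref{lm:derivative} on the open interval $(0,T_c)$, where $F$ is twice differentiable, and then to reach the two endpoints by a limiting argument, since the second-order partial derivatives of $F$ are only controlled on the open set $W_1$ (Lemma~\ref{lm:FC2W1}). First I note that for $0<T<T_c$ the monotonicity of $f$ gives $0<f(T)<\Delta_0^2<2\Delta_0^2$, so $(T,f(T))\in W_1$; there $F$ is of class $C^2$ (Lemma~\ref{lm:FC2W1}), $F_Y<0$ (Lemma~\ref{lm:FTFYminus}), and $F_{TY}=F_{YT}$ (Lemma~\ref{lm:FTTexistence}). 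Since $f\in C^1([0,T_c])$ (Lemma~\ref{lm:derivative}), the maps $T\mapsto F_T(T,f(T))$ and $T\mapsto F_Y(T,f(T))$ are $C^1$ on $(0,T_c)$, so $f'$ is $C^1$ there, and differentiating $F(T,f(T))\equiv 0$ twice yields
\[
f''(T)=-\,\frac{F_{TT}(T,f(T))+2\,F_{TY}(T,f(T))\,f'(T)+F_{YY}(T,f(T))\,f'(T)^2}{F_Y(T,f(T))}\,,\qquad 0<T<T_c\,.
\]
It then remains to show that $f''$ extends continuously to $T=0$ and $T=T_c$; a standard consequence of the mean value theorem (applied to $f'$, which is continuous on $[0,T_c]$) upgrades these one-sided limits to genuine values of $f''$, so that $f\in C^2([0,T_c])$.

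For the limit at $T=0$: as $T\downarrow 0$ we have $f(T)\to\Delta_0^2>0$ and $f'(T)\to 0$ (Lemma~\ref{lm:derivative}), and the argument $\eta=\sqrt{\xi^2+f(T)}/(2k_BT)$ in the integrals satisfies $\eta\ge c/T$ for some $c>0$ and all small $T$. Using $1/\cosh^2\eta\le 4e^{-2\eta}$ in the explicit formulas from the proof of Lemmas~\ref{lm:FTFYW1} and \ref{lm:FTTexistence}, each of $F_T(T,f(T))$, $F_{TT}(T,f(T))$ and $F_{TY}(T,f(T))$ is bounded by a fixed negative power of $T$ times $e^{-c'/T}$ for some $c'>0$, hence tends to $0$ faster than any power of $T$; in particular $F_T(T,f(T))\to 0$, and since $F_Y$ is continuous on $W$ (Lemma~\ref{lm:FC1onW}) with $F_Y(0,\Delta_0^2)\ne 0$ by \eqref{eq:FYonW}, also $f'(T)=-F_T/F_Y$ is super-polynomially small. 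Consequently, although $F_{YY}(T,f(T))$ grows at worst like $T^{-5}$ (from $|G|$ bounded on $[0,\infty)$, Lemma~\ref{lm:cgproperty}), the product $F_{YY}(T,f(T))\,f'(T)^2$ tends to $0$, as do $F_{TT}(T,f(T))$ and $F_{TY}(T,f(T))\,f'(T)$; dividing by $F_Y(T,f(T))\to F_Y(0,\Delta_0^2)\ne 0$ gives $f''(T)\to 0$, so $f''(0)=0$.

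For the limit at $T=T_c$: here $f(T)\to 0$, so $\eta\to\eta_0=\xi/(2k_BT_c)$ for each $\xi$, while the integrands appearing in $F_{TT}$, $F_{TY}$ and $F_{YY}$—namely $1/\cosh^2\eta$, $\eta\tanh\eta/\cosh^2\eta$, $\tanh\eta/(\eta\cosh^2\eta)$ and $G(\eta)$—extend continuously across $\eta=0$ and are uniformly bounded on $[0,\infty)$ (the last two by Lemmas~\ref{gproperty} and \ref{lm:cgproperty}). By dominated convergence each of $F_{TT}(T,f(T))$, $F_{TY}(T,f(T))$, $F_{YY}(T,f(T))$ converges to the corresponding integral with $\eta$ replaced by $\eta_0$; moreover $F_Y(T,f(T))\to F_Y(T_c,0)=\frac{1}{2(2k_BT_c)^3}\,I[g(\eta_0)]\ne 0$ and $f'(T)\to f'(T_c)$ (Lemma~\ref{lm:derivative}). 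Substituting these limits into the displayed formula for $f''(T)$ and collecting the three resulting terms over the powers $I[g(\eta_0)]$, $I[g(\eta_0)]^2$ and $I[g(\eta_0)]^3$ produces exactly the stated expression for $f''(T_c)$.

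I expect the endpoint analysis to be the main obstacle, for two reasons. At $T=T_c$ one must justify passing to the limit under the integral sign as $Y=f(T)\downarrow 0$, and in particular check that no singularity develops at $\xi=0$; this works because the relevant integrands are bounded functions of $\eta$ that are continuous at $\eta=0$, so the $0/0$ forms (for instance $\tanh\eta/(\eta\cosh^2\eta)$) are removable. At $T=0$ the delicate point is that $F_{YY}$ need not stay bounded; the resolution is the quantitative fact that $f'(T)$ (like $F_T$, $F_{TT}$ and $F_{TY}$) is exponentially small as $T\downarrow 0$—because every integrand in sight carries a factor $1/\cosh^2\eta$ with $\eta\to\infty$—which overwhelms the polynomial growth of $F_{YY}$ and forces $f''(0)=0$.
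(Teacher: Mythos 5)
Your proposal is correct, and it reproduces the paper's formula for $f''$ on $(0,T_c)$ (your expression $-\bigl(F_{TT}+2F_{TY}f'+F_{YY}(f')^2\bigr)/F_Y$ is identical to (\ref{eq:f''}) after substituting $f'=-F_T/F_Y$), but you handle the endpoints by a genuinely different route. The paper establishes differentiability of $f'$ at $T=0$ and $T=T_c$ by estimating the difference quotients of $f'$ directly: at $T=0$ it bounds $|f'(T)-f'(0)|/T$ by an explicit exponentially small quantity, and at $T=T_c$ it expands $\bigl(f'(T_c)-f'(T)\bigr)/(T_c-T)$ and applies the mean value theorem to $g$ and $\cosh$ \emph{inside} the integrals, using the decomposition of $\eta-\eta_0$ into a term proportional to $f(T)$ and a term proportional to $T_c-T$; this is the most computational part of the paper's argument. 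You instead compute $\lim_{T\downarrow 0}f''(T)$ and $\lim_{T\uparrow T_c}f''(T)$ (by the exponential smallness of $F_T$, $F_{TT}$, $F_{TY}$, $f'$ against the at-worst polynomial growth of $F_{YY}$ near $T=0$, and by dominated convergence with bounded integrands near $T=T_c$) and then upgrade these limits to one-sided derivatives of $f'$ via the mean value theorem applied to $f'$ on $[0,T_c]$. Both arguments are sound; yours is cleaner at $T=T_c$ because it avoids the explicit difference-quotient decomposition, at the cost of invoking the derivative-limit principle, while the paper's is more self-contained and yields the endpoint derivatives and the continuity of $f''$ there in one pass. One small point worth making explicit in your write-up: the dominated-convergence step at $T_c$ needs the integrands $\tanh\eta/(\eta\cosh^2\eta)$ and $G(\eta)$ to be bounded and continuous at $\eta=0$, which you correctly source from Lemmas \ref{gproperty} and \ref{lm:cgproperty}.
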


\begin{proof}
Lemma \ref{lm:FC2W1} implies that $f$ is of class $C^2$ on the open interval $(0,\,T_c)$ and that
\begin{equation}\label{eq:f''}
f''(T)=\frac{\,-F_{TT}F_Y^2+2F_{TY}F_TF_Y-F_{YY}F_T^2\,}{F_Y^3}\,,\qquad
0<T<T_c\,.
\end{equation}
So we have only to deal with $f$ and its derivatives at $T=0$ and at $T=T_c$.

\textit{Step 1}. We show that $f'$ is differentiable at $T=0$ and that
$f''$ is continuous at $T=0$.

Note that $f'(0)=0$ by Lemma \ref{lm:derivative}. Since $T$ is close to $T=0$, the inequality $f(T)>\Delta_0^2\, /2$ holds. It then follows from (\ref{eq:f'(T)}), (\ref{eq:FT}) and (\ref{eq:FY}) that
\[
\left| \frac{\,f'(T)-f'(0)\,}{T} \right| \leq
4\,\frac{ \,\displaystyle{
 {\;\sqrt{\hslash^2\omega_D^2+2\,\Delta_0^2}\;}^3
 \exp\left(-\frac{\,\sqrt{\Delta_0^2/2}\,}{k_BT}\right) }
 \,}{ \displaystyle{ 
 k_BT^3\left( \tanh \eta_1-\frac{\eta_1}{\,\cosh^2\eta_1\,}\right) }
 } \to 0 \qquad (T\downarrow 0).
\]
Here, \quad $\displaystyle{ \eta_1=\frac{\,\sqrt{\,\xi_1^2+f(T)\,}\,}{2k_BT}
\to \infty }$ \quad as $T\downarrow 0$ \quad $(0<\xi_1<\hslash\omega_D)$. Hence $f'$ is differentiable at $T=0$ and $f''(0)=0$.

By (\ref{eq:f''}), a similar argument gives \quad $\displaystyle{ 
\lim_{T\downarrow 0} f''(T)=0 }$. Hence $f''$ is continuous at $T=0$.

\textit{Step 2}. We show that $f'$ is differentiable at $T=T_c$ and that $f''$ is continuous at $T=T_c$.

Note that
\[
f'(T_c)=8\, k_B^2T_c\,\frac{\,\displaystyle{
 I\left[ \,\frac{1}{\,\cosh^2 \eta_0\,}\, \right]
}\,}{\,\displaystyle{ 
 I\left[ \,g(\eta_0)\, \right]
}\,}\,,
\qquad \eta_0=\frac{\xi}{\,2k_BT_c\,}
\]
by Lemma \ref{lm:derivative}. It follows from (\ref{eq:f'(T)}), (\ref{eq:FT}) and (\ref{eq:FY}) that
\[
f'(T)=8\, k_B^2T \,\frac{\,\displaystyle{
 I\left[ \,\frac{1}{\,\cosh^2 \eta\,}\, \right]
}\,}{\,\displaystyle{ 
 I\left[ \,g(\eta)\, \right]
}\,}
\,, \qquad \eta=\frac{\,\sqrt{\,\xi^2+f(T)\,}\,}{2k_BT}\,.
\]
Hence
\begin{eqnarray}\nonumber
& & \frac{\,f'(T_c)-f'(T)\,}{T_c-T} \\ \nonumber
&=&8\, k_B^2\,\frac{\,\displaystyle{
 I\left[ \,\frac{1}{\,\cosh^2 \eta_0\,}\, \right]  }\,}{\,\displaystyle{ 
 I\left[ \,g(\eta_0)\, \right]  }\,}+\frac{8\, k_B^2T}{\,T_c-T\,}
\frac{
 I\left[ \,\frac{1}{\,\cosh^2 \eta_0\,}\, \right] \left\{
 I\left[ \,g(\eta)\, \right]-I\left[ \,g(\eta_0)\, \right] \right\}
}{
I\left[ \,g(\eta_0)\, \right] I\left[ \,g(\eta)\, \right]
} \\ \nonumber
& & \quad +\frac{8\, k_B^2T}{\,T_c-T\,}\,
\frac{\,
 I\left[ \,g(\eta_0)\, \right] \left\{
 I\left[ \,\frac{1}{\,\cosh^2 \eta_0\,}\, \right]
 -I\left[ \,\frac{1}{\,\cosh^2 \eta\,}\, \right] \right\}
\,}{
I\left[ \,g(\eta_0)\, \right] \, I\left[ \,g(\eta)\, \right]
}\,.
\end{eqnarray}
Note that \quad $g(\eta)-g(\eta_0)=(\eta-\eta_0)g'(\eta_1)$ and
$\cosh\eta-\cosh\eta_0=(\eta-\eta_0)\sinh\eta_2$. Here,
\[
\eta_0=\frac{\xi}{\,2k_BT_c\,}<\eta_i<\eta=\frac{\,\sqrt{\,\xi^2+f(T)\,}\,}{2k_BT}\,,\qquad i=1,\,2
\]
and
\[
\eta-\eta_0=\frac{1}{\,2k_BT\,}\left\{
\frac{f(T)}{\, \sqrt{\,\xi^2+f(T)\,}+\xi \,}+\xi\frac{\,T_c-T\,}{T_c}
\right\}.
\]
Since $T$ is close to $T_c$, the inequality $T>T_c\, /2$ holds. Therefore,
by Lemma \ref{lm:cgproperty},
\[
\left| \frac{g'(\eta_1)}{\,\sqrt{\,\xi^2+f(T)\,}+\xi\,} \right| \leq
\frac{1}{\,k_BT_c\,}\,\max_{\eta\geq 0}\left| G(\eta) \right|
\]
and
\[
\left| \frac{\sinh\eta_2}{\,\sqrt{\,\xi^2+f(T)\,}+\xi\,} \right| \leq
\frac{1}{\,k_BT_c\,}\,\max_{0 \leq\eta\leq M}
 \left| \frac{\,\sinh\eta\,}{\eta} \right|,\qquad
M=\frac{\,\sqrt{ \hslash^2\omega_D^2+2\,\Delta_0^2 }\,}{k_BT_c}\,.
\]
So $f'$ is differentiable at $T=T_c$, and it is easy to see that the form of $f''(T_c)$ is exactly the same as that mentioned just above.

Furthermore, it follows from (\ref{eq:f''}) that $f''$ is continuous at
$T=T_c$.
\end{proof}

%%%%%%%%%%%%%%%%%%%%%%%%%%%%%%%%%%%%%%%%%%%%%%%%%%%%

\end{document}